\DeclareMathAlphabet{\pazocal}{OMS}{zplm}{m}{n}
\newtheorem{lemma}{Lemma}
\renewcommand{\algocf@makecaption}[2]{%
	\addtolength{\hsize}{1.5\algomargin}%
	\parbox[t]{\hsize}{\algocf@captiontext{#1:}{#2}}%
	\addtolength{\hsize}{-1.5\algomargin}%
}
\newcommand{\cupdot}{\mathbin{\mathaccent\cdot\cup}}
\newcommand{\Attr}{\mathit{Attr}}
\newcommand{\pr}{\textsf{pr}}
\newcommand{\invalpha}{{\overline{\alpha}}}
\newcommand{\sqdiamond}{\tikz [x=1.2ex,y=1.2ex,line width=.08ex] \draw (0,.5) -- (.5,1) -- (1,.5) -- (.5,0) -- (0,.5) -- cycle;}
\newcommand{\sqsq}{\tikz [x=0.95ex,y=1ex,line width=.1ex] \draw (0,0) -- (1,0) -- (1,1) -- (0,1) -- (0,0) -- cycle;}
\newcommand{\Even}{{\raisebox{0.13ex}{\scalebox{0.95}{\sqdiamond}}}}
\newcommand{\Odd}{{\raisebox{0.15ex}{\scalebox{0.9}{$\sqsq$}}}}
\newcommand{\Veven}{V_{\Even}}
\newcommand{\Vodd}{V_{\Odd}}
\DeclareRobustCommand{\Vis}{\textsf{Visit}}
\DeclareRobustCommand{\Avoid}{\textsf{Avoid}}
\DeclareRobustCommand{\bdiamond}{\mathbin{\textbf{\raisebox{.25ex}{\scalebox{.6}{\rotatebox[origin=c]{45}{$\Box$}$\,$}}}}}
\DeclareRobustCommand{\bbox}{\mathbin{\textbf{\raisebox{.25ex}{\scalebox{.6}{{$\Box$}$\:$}}}}}
\newenvironment{TodoBoxedInternal}[1][]
{\begin{mdframed}[style=todo]\parindent0pt}{\end{mdframed}}
\newcommand{\Todo}[1]{%
	\begin{TodoBoxedInternal}
		#1
	\end{TodoBoxedInternal}%
}
\newcommand{\Todo}[1]{}
\title{Simple Fixpoint Iteration To Solve Parity Games}
\author{
  Tom van Dijk\
  \institute{Formal Methods and Tools \\ University of Twente, Enschede}
  \email{t.vandijk@utwente.nl}
  \and Bob Rubbens
  \institute{Formal Methods and Tools \\ University of Twente, Enschede}
  \email{r.b.rubbens@student.utwente.nl}
}
\begin{document}
	
\maketitle

\begin{abstract}
%!TEX root = main.tex
%Parity games have important practical applications in formal verification,
%especially to solve the model-checking problem and the synthesis problem of the modal mu-calculus and related temporal logics.

A naive way to solve the model-checking problem of the mu-calculus uses fixpoint iteration.
Traditionally however mu-calculus model-checking is solved by a reduction in linear time to a parity game, which is then solved using one of the many algorithms for parity games.

We now consider a method of solving parity games by means of a naive fixpoint iteration.
Several fixpoint algorithms for parity games have been proposed in the literature.
In this work, we introduce an algorithm that relies on the notion of a distraction.
The idea is that this offers a novel perspective for understanding parity games.
We then show that this algorithm is in fact identical to two earlier published fixpoint algorithms for parity games and thus that these earlier algorithms are the same.

Furthermore, we modify our algorithm to only partially recompute deeper fixpoints after updating a higher set and show that this modification enables a simple method to obtain winning strategies.

We show that the resulting algorithm is simple to implement and offers good performance on practical parity games.
We empirically demonstrate this using games derived from model-checking, equivalence checking and reactive synthesis and show that our fixpoint algorithm is the fastest solution for model-checking games.

\end{abstract}

\section{Introduction}
\label{sec:introduction}
%!TEX root = main.tex

Parity games are turn-based games played on a finite directed graph.
Two players \emph{Odd} and \emph{Even} move a token along the edges of the graph, yielding an infinite play. %play an infinite game by moving a token along the edges of the graph.
Each vertex belongs to exactly one player, who decides the successor vertex in the play.
Vertices are labeled with a natural number \emph{priority}.
The winner of a play is determined by the highest priority that is encountered infinitely often along the play.
Player Odd wins if this priority is odd; otherwise, player Even wins.

Parity games are interesting both for their practical applications and for complexity theoretic reasons.
Their study has been motivated by their relation to many problems in formal verification and synthesis that can be reduced to the problem of solving parity games, as many properties of programs are naturally specified by means of fixpoints and parity games capture the expressive power of nested least and greatest fixpoint operators.
In particular, there is a tight connection with the modal $\mu$-calculus~\cite{DBLP:reference/mc/BradfieldW18,DBLP:journals/tcs/Kozen83}. The verification and satisfiability problems of the $\mu$-calculus can be linearly reduced to deciding the winner of a parity game~\cite{DBLP:journals/tcs/EmersonJS01,DBLP:conf/concur/Stirling95} and solving parity games can be linearly reduced to a formula in the modal $\mu$-calculus~\cite{DBLP:conf/stacs/Walukiewicz96}.

Parity games are interesting for complexity theory, 
as the problem of determining the winner of a parity game is known to lie in $\text{UP}\cap\text{co-UP}$~\cite{DBLP:journals/ipl/Jurdzinski98},
%since it is \emph{memoryless determined}~\cite{}, i.e., .
%
which is contained in $\text{NP}\cap\text{co-NP}$~\cite{DBLP:journals/tcs/EmersonJS01}. %a weak subset of NP.
The problem is therefore unlikely to be NP-complete and it is widely believed that a polynomial solution exists.
Earlier subexponential and recent quasi-polynomial solutions to parity games strengthen this belief.
%The problem of finding a strategy given a solution is equally difficult (cite Maciej).
Despite much effort, a polynomial-time algorithm has not been found yet.
% A proof that parity games are not in P would imply that they separate P and NP.

% Whether a polynomial algorithm exists is currently unknown, despite much effort.

%\paragraph{Contributions}

It seems evident that we require a better understanding of parity games to answer whether they can be solved in polynomial time.
This paper is part of an effort to understand parity games by considering how different algorithms deal with so-called \emph{distractions} and to see if perhaps we can combine features from different algorithms.
As a direct result of this effort, we have discovered a very easy method to obtain winning strategies for the possibly most naive solution to parity games, which is via fixpoint iteration.
The method exposes the relationship between fixpoint iteration and the famous recursive algorithm by Zielonka.
For various treatments of Zielonka's recursive algorithm, we refer to~\cite{DBLP:conf/tacas/Dijk18,Verver2013,DBLP:journals/tcs/Zielonka98}

Solutions to parity games via fixpoint computation essentially translate the game into a formula of the $\mu$-calculus which is then solved naively.
Two such algorithms have been proposed in the literature.
Based on the formulas by Walukiewicz~\cite{DBLP:conf/stacs/Walukiewicz96} that translate the winning condition of a parity game to $\mu$-calculus formulas over the parity game, Bruse et al. proposed a fixpoint algorithm we call \texttt{BFL}~\cite{DBLP:journals/corr/BruseFL14}.
Based on earlier work by Kupferman and Vardi~\cite{DBLP:conf/stoc/KupfermanV98}, Di Stasio et al. implemented the
\texttt{APT}~\cite{DBLP:conf/wia/StasioMPV16} algorithm.

The contributions of this paper are the following.
%\begin{enumerate}[nosep]
%\item
We \textbf{discuss} distractions in parity games and how they offer a compelling perspective to study different features of parity game solving algorithms.
%\item
We \textbf{present} a novel distraction fixpoint iteration (\texttt{DFI}) algorithm based on computing the distractions in a parity game and \textbf{prove} its correctness based on a construction of winning strategies.
%\item 
We \textbf{propose} an optimization to \texttt{DFI} that only partially recomputes the lower fixpoints and show how this optimization trivializes strategy computation.
%\item
We \textbf{demonstrate} that \texttt{DFI} is {simple to implement}.
%\item 
We \textbf{compare} \texttt{DFI} to the two fixpoint algorithms \texttt{APT} and \texttt{BFL} and show that all three are equivalent.
%\item 
We \textbf{show} empirically that the \texttt{DFI} algorithm is efficient for practical parity games and the fastest solution for practical model-checking games.
%\end{enumerate}

\section{Preliminaries}
\label{sec:preliminaries}
%!TEX root = main.tex

\subsection{Parity games}

%Parity games are two-player turn-based infinite-duration games over a finite directed graph $G=(V,E)$, where
%every vertex belongs to exactly one of two players called player \emph{Even} and player \emph{Odd}, 
%and where every vertex is assigned a natural number called the \emph{priority}.
%Starting from some initial vertex, a play of both players is an infinite path in $G$ where the owner of each vertex determines the next move. The winner of such an infinite play is determined by the parity of the highest priority that occurs infinitely often along the play.

We formally define a parity game $\Game$ as a tuple $(\Veven, \Vodd, E, \pr)$ where $V=\Veven\cupdot \Vodd$ is a set of $n$ vertices partitioned into the sets $\Veven$ controlled by player \emph{Even} and $\Vodd$ controlled by player \emph{Odd}, and $E\subseteq V\times V$ is a left-total binary relation describing all moves. 
Every vertex has at least one successor.
We also write $E(u)$ for all successors of $u$ and $u\rightarrow v$ for $v\in E(u)$.
%, such that $(V, E)$ is a finite graph.
The function $\pr\colon V\rightarrow \{0,1,\dotsc,d\}$
assigns to each vertex a \emph{priority}, where $d$ is the highest priority in the game.
We write $\alpha\in\{\Even,\Odd\}$ to denote a player $\Even$ or $\Odd$ and $\invalpha$ for the opponent of $\alpha$ and we also use $0$ for player Even and 1 for player Odd.
When representing a parity game visually, we use diamonds for vertices of player Even and boxes for vertices of player Odd.

% \textbf{Additional notation}: $v^{\nxt}$, intersection of E and V, intersection of game...
%A sequence of vertices $v_1,\dotsc,v_n$ is a \emph{path} if $v_m\rightarrow v_{m+1}$ for all $1\leq m<n$.

We write
% $\pr(v)$ for the priority of a vertex $v$
%and 
$\pr(V)$ for the highest priority of vertices $V$ and
$\pr(\Game)$ for the highest priority in the game $\Game$.
%Furthermore, we write $\pr^{-1}(i)$ for all vertices with the priority $i$.
We write $V_p$ for the set of vertices with priority $p$. %$\pr^{-1}(p)$.
With $V_\text{even}$ and $V_\text{odd}$ we denote all vertices with an even or odd priority.
Notice that $V_\Even$ and $V_\text{even}$ are not the same sets and the same holds for $V_\Odd$ and $V_\text{odd}$.

A \emph{play} $\pi=v_0 v_1 \dots$ is an infinite sequence of vertices consistent with $E$, i.e.,
$v_i \rightarrow v_{i+1}$ for all successive vertices.
% A \emph{play} is an infinite path.
We denote with $\inf(\pi)$ the vertices that occur infinitely often in $\pi$.
%We write $\pr(\pi)$ for the highest priority in a play $\pi$.
%, or the highest priority that occurs infinitely often in a play $\pi$. % if $\pi$ is an infinite play, or the highest priority in a finite play $\pi$.
Player Even wins a play $\pi$ if $\pr(\inf(\pi))$ is even; player Odd if $\pr(\inf(\pi))$ is odd.
%, and $\text{Plays}(V)$ for $\{\,\pi\in\text{Plays}(v)\mid v\in V\,\}$.
%
%The winner of an infinite play is determined by the highest priority encountered infinitely often, i.e., $\pr(\pi)$.

A (positional) \emph{strategy} $\sigma\subseteq V\to V$ assigns to each vertex in its domain a single successor in $E$, i.e., $\sigma\subseteq E$.
We refer to a strategy of player $\alpha$ to restrict the domain of $\sigma$ to $V_\alpha$. %, i.e., $\sigma_\alpha\colon V_\alpha\to V$.
In the remainder, all strategies $\sigma$ are of a player $\alpha$.
We write $\text{Plays}(v)$ for the set of plays starting at vertex $v$.
We write $\text{Plays}(v, \sigma)$ for all plays from $v$ consistent with $\sigma$,
and $\text{Plays}(V,\sigma)$ for $\{\,\pi\in\text{Plays}(v,\sigma)\mid v\in V\,\}$.

A basic result for parity games is that they are memoryless determined~\cite{DBLP:conf/focs/EmersonJ91}, i.e., each vertex is either winning for player Even or for player Odd, and both players have a strategy for their winning vertices.
Player $\alpha$ wins vertex $v$ if they have a strategy $\sigma$ such that every $\pi\in\text{Plays}(v,\sigma)$ is winning for player $\alpha$.

% Note that the strategy is memoryless (cite) and thus games play out to loops (all loops that the other player can cause)
% So in the winning dominion, some nodes are part of these loop, others are visited at most once.

% A $p$-dominion as a dominion where the highest priority potentially encountered infinitely often is $p$, that is, for which there is a play compatible with the winning strategy for which $p$ is encountered infinitely often.

\iftrue
Several algorithms for solving parity games employ \emph{attractor computation}.
Given a set of vertices $A$,
the attractor of $A$ for a player $\alpha$ represents those vertices from which player $\alpha$ can force a play to visit $A$.
We write $\Attr^\Game_\alpha(A)$ to attract vertices in $\Game$ to $A$ as player $\alpha$,
i.e.,
the fixpoint of
\[Z := A \cup \{\;v\in V_\alpha \mid E(v)\cap Z \neq \emptyset\;\} \cup \{\;v\in V_{\invalpha} \mid E(v)\subseteq Z\;\}\]
%\[\mu Z \,.\, A \cup \{\;v\in V_\alpha \mid E(v)\cap Z \neq \emptyset\;\} \cup \{\;v\in V_{\invalpha} \mid E(v)\subseteq Z\;\}\]
Informally, we compute the $\alpha$-attractor of $A$ with a backward search from $A$, initially setting $Z:=A$ and iteratively adding $\alpha$-vertices with a successor in $Z$ and $\invalpha$-vertices with no successors outside $Z$.
% We also obtain a strategy $\sigma$ for player $\alpha$, starting with an empty strategy, by selecting a successor in $Z$ when we attract vertices of player $\alpha$ and when the backward search finds a successor in $Z$ for the $\alpha$-vertices in $A$.
We call a set of vertices $A$ $\alpha$-maximal if $A=\Attr^\Game_\alpha(A)$. % where $\alpha \equiv_2 \pr(A)$.
The attractor also yields an ``attractor strategy'' by selecting a vertex in $Z$ for every added $\alpha$-vertex $v$ when $v$ is added to $Z$, and by selecting a vertex in $Z$ for all $\alpha$-vertices in $A$ that do not yet have a strategy but can play to $Z$.
\fi

\iffalse
A \emph{dominion} $D$ is a set of vertices for which player $\alpha$ has a strategy $\sigma$ such that all plays consistent with $\sigma$ stay in $D$ and are winning for player $\alpha$.
We also write a \emph{$p$-dominion} for a dominion where $p$ is the highest priority encountered infinitely often in plays consistent with $\sigma$, i.e., $p:=\max\{\,\pr(\inf(\pi))\mid \pi\in\textrm{Plays}(D,\sigma)\,\}$.
\fi

\subsection{The modal \texorpdfstring{$\mu$}{mu}-calculus}

We now introduce the modal $\mu$-calculus~\cite{DBLP:reference/mc/BradfieldW18,DBLP:journals/tcs/Kozen83},
which we use in this paper as an aid to describe and reason about the algorithms.
%that we include here mainly because we use this calculus for notation to describe and reason about the algorithms.
%
We define the $\mu$-calculus over parity games.
Formulas are constructed using conjunction, disjunction, modalities and fixpoint operators.
Let $\mathbb{X}$ be a set of \emph{second-order variables}.
The set of $\mu$-calculus formulas in positive normal form is generated by the following grammar:
\[
\phi ::= U \mid \neg U \mid \phi\land\phi \mid \phi\lor\phi \mid \bdiamond\phi \mid \bbox\phi \mid \mu X.\phi \mid \nu X.\phi \mid X
\]
where $X\in\mathbb{X}$ and $U\subseteq V$ is a set of vertices of the parity game,
for example the sets $V_\Even$, $V_\Odd$, $V_p$, $V_\text{even}$ and $V_\text{odd}$ introduced above.
Given some parity game $\Game$, an interpretation of the variables in $\mathbb{X}$ is a mapping $\rho\colon \mathbb{X}\to 2^V$.
The semantics of this $\mu$-calculus is a set of vertices, inductively defined as follows.
%for $\mu$-formulas $\phi$ and $\psi$,
\[
\begin{tabu}{lrl}
\llbracket U \rrbracket_\rho & := & U \\
\llbracket \neg U \rrbracket_\rho & := & V\setminus U \\
\llbracket \phi\land\psi \rrbracket_\rho & := & \llbracket \phi \rrbracket_\rho \cap \llbracket \psi \rrbracket_\rho\\
\llbracket \phi\lor\psi \rrbracket_\rho & := & \llbracket \phi \rrbracket_\rho \cup \llbracket \psi \rrbracket_\rho\\
\llbracket \bdiamond\phi \rrbracket_\rho & := & \{ v\in V \mid \exists u\in E(v). u\in\llbracket \phi \rrbracket_\rho \} \\
\llbracket \bbox\phi \rrbracket_\rho & := & \{ v\in V \mid \forall u\in E(v). u\in\llbracket \phi \rrbracket_\rho \} \\
\llbracket \mu X.\phi \rrbracket_\rho & := & \bigcap \{ U\subseteq V \mid \llbracket \phi \rrbracket_{\rho[X\mapsto U]}\subseteq U \} \\
& = & \textit{lfp}(U := \llbracket \phi \rrbracket_{\rho[X\mapsto U]}) \\
\llbracket \nu X.\phi \rrbracket_\rho & := & \bigcup \{ U\subseteq V \mid U \subseteq \llbracket \phi \rrbracket_{\rho[X\mapsto U]} \} \\
& = & \textit{gfp}(U := \llbracket \phi \rrbracket_{\rho[X\mapsto U]}) \\
\llbracket X \rrbracket_\rho & := & \rho(X) \\
\end{tabu}
\]

It is well-known that the semantics of $\mu X.\phi$ is the \emph{least fixpoint} of $\phi$: starting with $\textbf{X}:=\emptyset$, we compute $\mathbf{X} := \llbracket \phi \rrbracket_{\rho[X\mapsto \mathbf{X}]}$ until \textbf{X} is a fixpoint; similarly, we compute the semantics of $\nu X.\phi$ with the \emph{greatest fixpoint} of $\phi$, starting with $\mathbf{X}:=V$, we apply $\phi$ until \textbf{X} is a fixpoint.

%As syntactic sugar we also admit $\phi\setminus \psi$ to mean $\phi\wedge\neg\psi$.

\section{Distractions}
\label{sec:distractions}

Imagine a parity game for which we know the winning regions and now we want to compute the winning strategies for both players.
The most naive method would select a random successor inside the winning region for every vertex of the winning player.
This is obviously incorrect as the following example shows.
\begin{center}
	\scalebox{0.9}{
	\begin{tikzpicture}		
		\tikzset{every edge/.append style={>=stealth,->,solid,thick,draw,text height=0.5ex,text depth=0.2ex}}
		\tikzset{every node/.append style={inner sep=0,minimum size=10mm,draw,fill=black!10}}
		\tikzset{my label/.style args={#1:#2}{
			append after command={
				($(\tikzlastnode.center)$) coordinate [label={[label distance=5mm,black]#1:\textbf{\strut #2}}]
		}}}
		\tikzstyle{even}=[diamond]
		\tikzstyle{odd}=[regular polygon,regular polygon sides=4]

	\draw (0,0)   node[even] (a) {1};
	\draw (1.5,0) node[even] (b) {2};
	\draw (a) edge [in=145,out=-155,loop,looseness=7] (a);
	\draw (a) edge [bend left=20] (b);
	\draw (b) edge [bend left=20] (a);
	\end{tikzpicture}
}
\end{center}
In this example player Even controls both vertices and wins the entire game by playing from \textbf{1} to \textbf{2},
as this results in each play alternating between the two vertices, i.e., the highest priority seen infinitely often is \textit{even}.
If however player Even would select a random strategy and select as the strategy to play from \textbf{1} to \textbf{1},
then the highest priority seen infinitely often would be \textit{odd} and player Even would lose.
Selecting a random strategy that stays in the won region is therefore incorrect.

A smarter but still incorrect method is to try to play to the highest priority vertex of the winner's parity, by repeatedly attracting to the highest vertices like in attractor-based algorithms~\cite{DBLP:conf/cav/BenerecettiDM16,DBLP:conf/cav/Dijk18,DBLP:journals/tcs/Zielonka98}.
This however does not always produce a correct result.
Sometimes playing to an ``attractive high priority'' vertex can be a deception, as the following example shows.

%\begin{center}
%	\scalebox{0.9}{
%		\begin{tikzpicture}		
%		\tikzset{every edge/.append style={>=stealth,->,solid,thick,draw,text height=0.5ex,text depth=0.2ex}}
%		\tikzset{every node/.append style={inner sep=0,minimum size=10mm,draw,fill=black!10}}
%		\tikzset{my label/.style args={#1:#2}{
%				append after command={
%					($(\tikzlastnode.center)$) coordinate [label={[label distance=5mm,black]#1:\textbf{\strut #2}}]
%		}}}
%		\tikzstyle{even}=[diamond]
%		\tikzstyle{odd}=[regular polygon,regular polygon sides=4]
%		
%		\draw (-3.0,0) node[even]  (x) {3};
%		\draw (-1.5,0) node[even]  (a) {1};
%		\draw (0,0)    node[even]  (b) {2};
%		\draw (1.5,0)  node[even]  (c) {16};
%		\draw (3.0,0)  node[even]  (d) {5};
%		\draw (4.5,0)  node[even]  (e) {4};
%		\draw (6.0,0)  node[even]  (f) {17};
%		\draw (x) edge [in=145,out=-155,loop,looseness=7] (x);
%		\draw (x) edge [bend left=30] (c);
%		\draw (a) edge (x);
%		\draw (a) edge [bend left=20] (b);
%		\draw (b) edge [bend left=20] (a);
%		\draw (b) edge (c);
%		\draw (c) edge (d);
%		\draw (d) edge [bend left=20] (e);
%		\draw (e) edge [bend left=20] (d);
%		\draw (e) edge (f);
%		\draw (f) edge [bend left=30] (b);
%		\end{tikzpicture}
%	}
%\end{center}
\begin{center}
	\scalebox{0.9}{
		\begin{tikzpicture}		
		\tikzset{every edge/.append style={>=stealth,->,solid,thick,draw,text height=0.5ex,text depth=0.2ex}}
		\tikzset{every node/.append style={inner sep=0,minimum size=10mm,draw,fill=black!10}}
		\tikzset{my label/.style args={#1:#2}{
				append after command={
					($(\tikzlastnode.center)$) coordinate [label={[label distance=5mm,black]#1:\textbf{\strut #2}}]
		}}}
		\tikzstyle{even}=[diamond]
		\tikzstyle{odd}=[regular polygon,regular polygon sides=4]
		
		\draw (-4.5,0) node[even]  (y) {3};
		\draw (-3.0,0) node[even]  (x) {18};
		\draw (-1.5,0) node[odd ]  (a) {1};
		\draw (0,0)    node[even]  (b) {2};
		\draw (1.5,0)  node[even]  (c) {16};
		\draw (3.0,0)  node[even]  (d) {5};
		\draw (4.5,0)  node[even]  (e) {4};
		\draw (6.0,0)  node[even]  (f) {17};
		\draw (y) edge [in=145,out=-155,loop,looseness=7] (y);
		\draw (y) edge [bend left=30] (c);
		\draw (x) edge (y);
		\draw (a) edge (x);
		\draw (a) edge [bend left=20] (b);
		\draw (b) edge [bend left=20] (a);
		\draw (b) edge (c);
		\draw (c) edge (d);
		\draw (d) edge [bend left=20] (e);
		\draw (e) edge [bend left=20] (d);
		\draw (e) edge (f);
		\draw (f) edge [bend left=30] (b);
		\end{tikzpicture}
	}
\end{center}
This entire game is won by player Even.
However, picking a winning strategy is not trivial.
The result of attractor computation is that no vertex is attracted to \textbf{18},
no vertex is attracted to \textbf{17},
vertices $\{ \textbf{1}, \textbf{2}, \textbf{3} \}$ are attracted to \textbf{16},
and finally vertex \textbf{4} is attracted to \textbf{5}.
If player Even plays from \textbf{3} to \textbf{3}, they lose.
Playing from \textbf{4} to \textbf{5} is losing.
Less obvious is that playing from \textbf{2} to \textbf{16} is also losing,
because player Even must play via \textbf{5} and \textbf{4} to \textbf{17}.
Any play that sees \textbf{16} infinitely often also sees \textbf{17} infinitely often.
% Finally, if player Even plays from \textbf{1} to \textbf{3} (to \textbf{16}), this will also result in a loss.
%
%
Player Even can only win by playing from \textbf{18} to \textbf{3} to \textbf{16} to \textbf{5} to \textbf{4} to \textbf{17} to \textbf{2} to \textbf{1}.
Then player Odd can either choose to play from \textbf{1} to \textbf{2} or to \textbf{18}, i.e., to a cycle with priority 2 or to a cycle with priority 18.
%
%forever and from \textbf{1} to \textbf{2}, alternating forever between \textbf{1} and \textbf{2}.

In order to win, player Even must \emph{not} play to \textbf{16} from some vertices.
% However not all vertices of Even must avoid \textbf{16}.
% From vertex \textbf{3}, player Even must play to \textbf{16}.
%
%
%
We propose to call vertices like vertex \textbf{16} {distractions}.
A \textbf{distraction} for player $\alpha$ is a vertex $v$ with an $\alpha$-priority $p$, such that if player $\alpha$ always plays to reach $v$ along paths of priorities $\leq p$, then player $\invalpha$ wins $v$ and all vertices that reach $v$.
That is, a distraction for $\alpha$ is a high value vertex $v$ with an $\alpha$-priority that player $\invalpha$ can win if player $\alpha$ always tries to visit it.
%A \textbf{distraction} is a vertex $v$ with priority $p$ such that if player $\alpha := (p \bmod 2)$ plays 
%according to the attractor strategy of $Z:=\Attr_\alpha^{\Game\leq p}(v)$ in the subgame $\Game_{\leq p}:=\Game\cap\{ v \mid \pr(v)\leq p\}$, then player $\invalpha$ wins all vertices in $Z$.
%That is, if player $\alpha$ attracts to $v$ all vertices with priority $\leq p$ that they can attract to $v$ via vertices of priority $\leq p$, then all these vertices are won by player $\invalpha$.
%
%This occurs when player $\invalpha$ can attract $v$ to some vertex of a higher priority of player $\invalpha$,
%or when player $\invalpha$ can attract $v$ to an $\invalpha$-dominion.
This occurs either when player $\invalpha$ can attract $v$ to vertices with higher priorities of $\invalpha$, so every cycle with $v$ also visits one of these vertices, or when player $\invalpha$ can attract $v$ to a (lower) $\invalpha$-dominion.
We distinguish clear distractions and devious distractions.
A \textbf{clear distraction} is a vertex $v$ that is a distraction for player $\alpha$ and in the winning region of player $\invalpha$.
While solving the game, player $\alpha$ may initially try to visit $v$, but at some point the algorithm determines that $v$ is losing for player $\alpha$ and player $\alpha$ then avoids $v$.
A \textbf{devious distraction} is a vertex $v$ that is a distraction for player $\alpha$ and in the winning region of player $\alpha$.
That is, player $\alpha$ wins vertex $v$ but only by not playing towards $v$ from some vertices that could play to $v$. %there are some vertices of player $\alpha$ for which, in order to win, player $\alpha$ must remember to not play towards $v$.
We conjecture that whenever simple attractor computation yields an incorrect strategy, this is due to devious distractions.
%Also, that we can use simple attractor computation to compute the winning strategies for solved games without devious distractions.

If a vertex $v$ is a distraction because the opponent can attract $v$ to higher vertices of $\invalpha$'s priorities, 
then this is either \emph{trivial}, when player $\invalpha$ directly attracts $v$, or player $\invalpha$ attracts $v$ via a \emph{tangle}.
%
%, then player $\invalpha$ can either attract $v$ to a subgame that is won by player $\invalpha$, or force all plays from $v$ to visit a vertex of a higher $\invalpha$'s priority.
%Sometimes player $\invalpha$ can directly attract a distraction to a vertex of higher $\invalpha$'s priority, but such distractions are trivial to find.
%In all other cases, player $\invalpha$ can force the play from vertex $v$ to a \emph{lower priority tangle} won by player $\invalpha$.
A \textbf{tangle}~\cite{DBLP:conf/cav/Dijk18} is a subgame where one player has a strategy to win all plays that stay inside the tangle.
The other player must therefore escape the tangle.
%If the loser cannot leave, then the tangle is part of a winning region.
%If the loser can leave, then the tangle forces the loser to leave the tangle via an ``escaping edge''.
These escapes then lead to vertices of higher priorities of player $\invalpha$, ensuring that any play that visits $v$ infinitely often also visits one of these vertices infinitely often.
%
%
%
%then (\textbf{dominion}) there are one or more vertices of a high opponent's priority ($p+1$, $p+3$, $\dotsc$) such that player $\invalpha$ can \emph{force} player $\alpha$ to reach these vertices.
%Player $\invalpha$ could either directly attract vertex $v$ to a higher opponent's priority, or indirectly via a lower region in which player $\invalpha$ wins all plays, also called a \emph{tangle}~\cite{DBLP:conf/cav/Dijk18}, as in the example above.
In the example, \textbf{17} does not directly attract \textbf{16}, but \textbf{4} and \textbf{5} form a tangle which player Even must escape.
% Hence it is initially unclear that Odd can force Even from \textbf{16} to \textbf{17} and that Even can force Odd into either a cycle with \textbf{2} or a visit of \textbf{18}. 
% Hence, initially Even plays from \textbf{2} to \textbf{16} but this strategy is bad.

The vertices with a higher $\invalpha$'s priority that player $\invalpha$ can force plays from a devious distraction $v$ to, are themselves \emph{clear} distractions (for player $\invalpha$) that are won by player $\alpha$, otherwise player $\alpha$ would not win vertex $v$.
Then, when a parity game solver decides that these vertices are distractions for $\invalpha$ and won by $\alpha$, the strategy that is a witness to this is also the strategy that avoids the devious distraction $v$, otherwise $v$ would not be a devious distraction.
Thus, simply remembering the strategy used to decide clear distractions yields the correct strategy to avoid devious distractions.

We now consider briefly how different parity game solving algorithms deal with distractions.
Nearly every solver initially prefers to play to the highest priority.
A solver must at some point decide that a nice high priority vertex is not actually a desirable target.
The challenge is thus to recognize early that some vertices are distractions and the key question to ask of these algorithms is how they accomplish this.
We believe that this may be a key to a deeper understanding of parity games.

%The key point is then that the algorithm will (typically) decide at some point that the opponent vertices are won by the proponent, and then conclude that therefore the proponent's vertex is also won, then the witness for this conclusion is a strategy that avoids the distraction (at least some vertices avoid it).

For the attractor-based algorithms like Zielonka's recursive algorithm~\cite{DBLP:journals/tcs/Zielonka98}, priority promotion~\cite{DBLP:conf/cav/BenerecettiDM16} and tangle learning~\cite{DBLP:conf/cav/Dijk18},
reaching a vertex with the highest priority remains the goal until it is actually attracted by the opponent.
For algorithms that employ progress measures, such as small progress measures~\cite{DBLP:conf/stacs/Jurdzinski00}, often reaching high priority vertices is the primary goal, but vertices with $\alpha$'s priority along the path get a higher value; thus if the value of a distraction does not increase, the algorithm lifts vertices along alternative paths, basically \emph{ignoring} distractions.
This is especially true with the recent quasipolynomial solutions, the ``succinct progress measures''~\cite{DBLP:conf/lics/JurdzinskiL17} and the ``ordered progress measures''~\cite{DBLP:journals/sttt/FearnleyJKSSW19}, as the value of vertices with progress quickly overtakes vertices with only a high priority.

Algorithms based on progress measures cannot find that distractions are attracted by the opponent. They are fundamentally unable to do so because they only compute progress measures from the perspective of a single player.
Current attractor-based algorithms have no mechanism to ignore vertices that have no good continuation, because they do not explicitly ignore such vertices or assign a higher value to vertices along the path to a high priority vertex.
No current algorithm combines these features.

%
%  once a vertex can reach a high priority, it will also seek to play to lower priorities that have also reached the high priority, finding alternative routes if there is no more progress along the original target. Meaning that if a vertex is a distraction, then along paths to the vertex there will not be progress.
%However notice that the progress measures algorithms actually give a higher value to the attracted vertices, so the algorithm will then try to find alternatives inside the region of attracted vertices.

%This is even more so with the quasipolynomial algorithms succincty tree encoding and play summaries.
%Their strength being that vertices/plays with progress overtake high priority value much faster.
%In both cases, a lack of progress ... how high the priority is , less important.

%So we see several methods.
%Demonstrating that a vertex is attracted to the winning region of the opponent.
%Ignoring a vertex that does not improve its value, by assigning a higher value to regions with progress.
%thus shifting priority.
%However no algorithm combines these features.

An open question is whether parity games without devious distractions are easier to solve than parity games with devious distractions.
Even clear distractions can make many algorithms slow, exponentially so, as we demonstrate for attractor-based algorithms with the Two Counters game~\cite{DBLP:journals/corr/abs-1807-10210}.
Notice also that the above discussion implies a partition of every winning region into subgames separated by the clear distractions where the strategy inside each subgame is the witness strategy that avoids any devious distractions (see e.g. Fig.~\ref{fig:regions} below).

\section{The distraction fixpoint iteration (\texttt{DFI}) algorithm}
\label{sec:dfi}
We propose an algorithm that does not directly compute the winning regions,
but instead computes which vertices are distractions.
This leads to an algorithm that naturally follows the intuition to
assume first that all vertices are won by the player of the parity of their priority,
and then to refine this estimation.

\subsection{Computing the distractions by means of fixpoints}

The algorithm maintains a series of sets $Z_0, Z_1, \dotsc, Z_d$ for the priorities $0,1,\dotsc,d$ in the game.
These sets are updated by a nested fixpoint operation and in the final state contain all vertices that are not won by the player of the parity of their priority.
Each set $Z_p$ contains the vertices that are estimated to be won by player $1-(p \bmod 2)$.
For example, vertices in $Z_5$ are estimated won by player Even.
$Z_p$ thus identifies distractions while solving the game,
even though in the final state of the computation, distractions that are won by their player, i.e., devious distractions, are not in the $Z_p$ sets.
We use the $Z_p$ sets such that membership in $Z_p$ is only relevant for vertices in $V_p$,
by working with a set $Z$ such that
\[
\begin{tabu}{rrl}
Z & := & (V_0 \land Z_0) \lor (V_1 \land Z_1) \lor \dots \lor (V_d \land Z_d) \\
  &  = & \bigvee_{p=0}^d (V_p\land Z_p)
\end{tabu}
\]

We compute who wins each vertex according to this set $Z$,
\[
\begin{tabu}{rrl}
\textsf{winner}(v, Z) & := &
	 \begin{cases} \pr(v) \bmod 2 & v\notin Z \\ 1 - (\pr(v) \bmod 2) & v \in Z \end{cases} \\
\textsf{Even}(Z) & := & \{ v \mid \textsf{winner}(v, Z) = 0 \} \\
\textsf{Odd}(Z) & := & \{ v \mid \textsf{winner}(v, Z) = 1 \} \\
\end{tabu}
\]

We can equivalently define these sets using the $\mu$-calculus notation,
\[
\begin{tabu}{rrl}
\textsf{Even}(Z) & := & (V_\text{even} \land \neg Z) \vee (V_\text{odd} \land Z) \\
\textsf{Odd}(Z) & := & (V_\text{even} \land Z) \vee (V_\text{odd} \land \neg Z) \\
\end{tabu}
\]

The fundamental idea of the algorithm is to estimate whether vertices are distractions based only on the direct successors and to update this estimate in a strict order, beginning with the least important vertices, and resetting the estimates of lower vertices whenever a higher vertex is updated.
%Here, we use $Z$ to record the current estimation.
To estimate whether a vertex is won in one step, based on its direct successors, given $Z$, we compute
\[
\begin{tabu}{rrl}
\textsf{onestep}(v, Z) & := & 
\begin{cases}
0 & \quad v \in V_\Even \land \exists u \in E(v) . \textsf{winner}(u, Z) = 0\\ 
1 & \quad v \in V_\Even \land \forall u \in E(v) . \textsf{winner}(u, Z) = 1\\ 
1 & \quad v \in V_\Odd \land \exists u \in E(v) . \textsf{winner}(u, Z) = 1\\ 
0 & \quad v \in V_\Odd \land \forall u \in E(v) . \textsf{winner}(u, Z) = 0\\ 
\end{cases} \\
\textsf{Onestep}_0(Z) & := & \{v\mid\textsf{onestep}(v,Z) = 0\} \\
\textsf{Onestep}_1(Z) & := & \{v\mid\textsf{onestep}(v,Z) = 1\} \\
\end{tabu}
\]

Or defined equivalently using the $\mu$-calculus notation,
\[
\begin{tabu}{rrl}
\textsf{Onestep}_0(Z) & := & (V_\Even \land \bdiamond \textsf{Even}(Z) ) \lor (V_\Odd \land \bbox \textsf{Even}(Z)) \\
\textsf{Onestep}_1(Z) & := & (V_\Even \land \bbox \textsf{Odd}(Z) ) \lor (V_\Odd \land \bdiamond \textsf{Odd}(Z)) \\
%& = & 
%\textsf{Onestep}_0(Z) & := & (V_\alpha \wedge \bdiamond \textsf{Won}_\alpha(Z) ) \lor (V_\invalpha \wedge \bbox \textsf{Won}_\alpha(Z)) \\
\end{tabu}
\]

%In a way, we are checking whether the estimated winning regions are closed, and repair the regions for vertices that should actually belong to the winning region of the other player.
Now that we can compute whether a vertex is forced in one step to the (estimated) winning region of each player, we can easily compute all vertices that are distractions according to this estimation,
\[
\begin{tabu}{rrl}
\textsf{OnestepDistraction}(Z) & := & (V_\text{even} \land \textsf{Onestep}_1(Z)) \lor (V_\text{odd} \land \textsf{Onestep}_0(Z)) \\
\end{tabu}
\]

These are all the vertices with even priorities that are estimated to be won in one step by player Odd and vice versa.
In a way, we are checking whether the estimated winning regions are closed, and update the vertices that should actually belong to the winning region of the other player.
We now solve parity games by computing which vertices are distractions,
\[
\begin{tabu}{rrl}
\textsf{Distraction} & := & \mu Z_d \dots \mu Z_1\ .\ \mu Z_0\ .\ \textsf{OnestepDistraction}\big(\bigvee_{p=0}^d (V_p\land Z_p)\big)) \\
%\textsf{Won}_0(\Game) & := & \textsf{Won}_0(\textsf{Distraction}) \\
%\textsf{Won}_1(\Game) & := & \textsf{Won}_1(\textsf{Distraction})
\end{tabu}
\]

Having computed all distractions, obtaining the winning regions is trivial.
Player Even wins $\textsf{Even}(\textsf{Distraction})$ and player Odd wins $\textsf{Odd}(\textsf{Distraction})$.
Notice that due to how $\mu$-calculus formulas work, all sets $Z_0\dots Z_d$ contain all (simple) distractions, however in practice we only consider vertices with priority $p$ when updating the set $Z_p$.
This optimization is important and has been noted by~\cite{DBLP:journals/corr/BruseFL14}.

We can now implement the basic algorithm as in Algorithm~\ref{alg:dfi1}.
We simply begin with the lowest vertices and work our way up until a set $Z_p$ is changed, upon which we reset all lower sets.

\begin{algorithm}[tb]
	\Def{\DFI{$\Game$}}{
		$Z \leftarrow \emptyset$ \tcp*{start with no distractions}
		$p \leftarrow 0$ \tcp*{start with lowest priority}
		\While(\tcp*[f]{while $\leq$ highest priority}){$p\leq d$}{
			$\alpha \leftarrow p \bmod 2$ \tcp*{current parity}
			$Y \leftarrow \{ v \in V_p \setminus Z \mid \textsf{onestep}(v, Z)\neq \alpha \}$ \tcp*{new distractions}
			\If{$Y\neq\emptyset$}{
				$Z \leftarrow Z\cup Y$ \tcp*{update current fixpoint $Z_p$}
				$Z \leftarrow Z \setminus \{ v \mid \pr(v) < p\}$ \tcp*{reset all lower fixpoints}
				$p \leftarrow 0$ \tcp*{restart with lowest priority}
			}
			\Else{
				$p \leftarrow p + 1$ \tcp*{fixpoint, continue higher}
			}
		}
		\Return $W_\Even, W_\Odd$ where $W_\Even$ $\leftarrow$ $\{v\mid \textsf{winner}(v,Z)=0\}$, $W_\Odd$ $\leftarrow$ $V \setminus W_\Even$ 
	}
\caption{The basic \texttt{DFI} algorithm}
\label{alg:dfi1}
\end{algorithm}

\subsection{Proving \texttt{DFI} correct by constructing the winning strategy}

In the following, we prove that \texttt{DFI} correctly computes the winning regions.
We prove this by constructing the winning strategies for both players in their won regions.
Notice that the \texttt{DFI} algorithm as presented above does not actually compute this winning strategy explicitly.

In the following, we define the set of vertices $V_{\leq p} := \{ v \mid \pr(v) \leq p\}$ and also the set $\textsf{Won}(\alpha)_{\leq p} := \{ v \in V_{\leq p} \mid \textsf{winner}(v,Z) = \alpha \}$ of vertices in $V_{\leq p}$ won by player $\alpha$ according to estimation $Z$.

\begin{lemma}
After computing the fixpoint of $Z_p$,
player $\alpha\in\{\Even,\Odd\}$ has a winning region $W_\alpha\equiv\textsf{Won}(\alpha)_{\leq p}$ and a strategy $\sigma_\alpha$ for all $v\in V_\alpha\cap W_\alpha$,
such that 
\begin{itemize}[nosep]
	\item $\alpha$ never plays from $W_\alpha$ to $\textsf{Won}(\invalpha)$
	\item $\invalpha$ cannot play from $W_\alpha$ to $\textsf{Won}(\invalpha)$
	\item all cycles consistent with $\sigma_\alpha$ in $W_\alpha$ are won by $\alpha$
\end{itemize}
\label{lemma:str}
\end{lemma}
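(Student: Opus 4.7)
I would prove this by induction on $p$, explicitly constructing the strategies $\sigma_\Even$ and $\sigma_\Odd$ so that the three bullet points can be verified directly from the construction. For the base case $p=0$, $V_{\leq 0}=V_0$ and after the $Z_0$-fixpoint $W_\Even=V_0\setminus Z_0$ and $W_\Odd=V_0\cap Z_0$. For each $v\in V_\alpha\cap W_\alpha$ I take $\sigma_\alpha(v)$ to be a successor witnessing the appropriate disjunct of $\textsf{Onestep}_{\alpha\bmod 2}(Z)$. Bullets (i) and (ii) are immediate from the $\textsf{onestep}$ definition. For bullet (iii), cycles in $W_\Even$ are trivially won by Even since all priorities are $0$; cycles in $W_\Odd$ are ruled out by a rank argument in which I rank each $v\in Z_0$ by the least-fixpoint iteration round in which it entered and choose $\sigma_\Odd$ so that every $v\in V_\Odd\cap Z_0$ is sent either to $V_\text{odd}$ (outside $V_0$) or to a vertex of strictly smaller rank. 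Since every $v\in V_\Even\cap Z_0$ has \emph{all} its successors in $V_\text{odd}\cup Z_0$ with strictly smaller rank when not in $V_\text{odd}$, and rank-$1$ vertices are forced into $V_\text{odd}$, any trajectory in $W_\Odd$ consistent with $\sigma_\Odd$ descends in rank until it exits $V_0$, precluding cycles.

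Inductive step: Assume the lemma for $p-1$. Let $\alpha=p\bmod 2$. I keep $\sigma_\alpha^{p-1}$ on $V_{\leq p-1}$. For $v\in V_p\setminus Z$ (won by $\alpha$) with $v\in V_\alpha$, choose $\sigma_\alpha(v)$ as a $\textsf{onestep}$ witness in $\textsf{Won}(\alpha)$; for $v\in V_p\cap Z$ (won by $\invalpha$) with $v\in V_{\invalpha}$, extend $\sigma_{\invalpha}(v)$ analogously, ranking $Z_p$ vertices by the least-fixpoint round of $Z_p$ and preferring witnesses that either exit $V_{\leq p}$ or descend strictly in rank. Bullets (i) and (ii) then follow from the $\textsf{onestep}$ definition and the IH. For bullet (iii), let $C\subseteq W_\alpha$ be a cycle consistent with $\sigma_\alpha$. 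If $C$ avoids $V_p$, then $C\subseteq W_\alpha\cap V_{\leq p-1}=\textsf{Won}(\alpha)_{\leq p-1}$ and the IH concludes. If $C$ visits a priority-$p$ vertex, the maximum priority on $C$ equals $p$: if $\alpha=p\bmod 2$, this matches $\alpha$ and $\alpha$ wins $C$; otherwise $C$ must visit some $v\in V_p\cap Z_p$, and the rank-descent forces every successive visit to $V_p\cap Z_p$ to occur at a strictly smaller rank, impossible on a finite cycle.

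The main obstacle is closing this rank-descent across the lower region: when a play leaves $V_p\cap Z_p$ for $V_{\leq p-1}$ and later re-enters $V_p\cap Z_p$ via the IH strategies or forced edges, one must show the re-entry rank is strictly smaller than the exit rank. This boils down to a multi-level signature argument exploiting the nested least-fixpoint structure of $\mu Z_d\dots\mu Z_0$, and is essentially the only nontrivial parity-game content of the proof; the rest is bookkeeping over $\textsf{onestep}$ and the induction.
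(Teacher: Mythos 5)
There is a genuine gap, and you have in fact located it yourself: the ``rank-descent across the lower region'' that you defer to ``a multi-level signature argument'' is precisely the nontrivial content of the lemma, and your proposal does not supply it. Worse, the construction you do give cannot be repaired without changing it: in the inductive step you ``keep $\sigma_\alpha^{p-1}$ on $V_{\leq p-1}$'', i.e.\ the strategy produced by the \emph{final} recomputation of the lower fixpoints. But under the final $Z$, a priority-$p$ distraction $v\in V_p\cap Z_p$ satisfies $\textsf{winner}(v,Z)=1-(p\bmod 2)$, so $v$ lies in $\textsf{Won}(\alpha)$ for $\alpha=1-(p\bmod 2)$; the induction hypothesis for the lower game therefore does \emph{not} forbid plays from $\textsf{Won}(\alpha)_{<p}$ back to $v$, and the final lower strategy may well route through $v$. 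That yields a cycle inside $W_\alpha$ whose top priority is $p$, of parity $\invalpha$ --- a losing cycle for $\alpha$. So the re-entry rank need not decrease under your strategy choice, and bullet (iii) fails.

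The paper closes exactly this hole by a second, inner induction over the iterations of the $Z_p$-fixpoint rather than by ranks. It maintains a monotonically growing region $W_\alpha$ whose strategy is \emph{frozen} at the round in which each vertex was first won, and only the newly won part $\textsf{Won}(\alpha)_{<p}\setminus W_\alpha$ receives the freshly computed lower strategy. The point is structural: at the round when a vertex entered $W_\alpha$, every later-added $p$-distraction still belonged to $\textsf{Won}(\invalpha)$, so the frozen strategy provably never reaches it; consequently every cycle lies entirely inside one layer ($W_\alpha$ or $\textsf{Won}(\alpha)_{<p}\setminus W_\alpha$), where the induction hypothesis applies, and no cycle passes through $V_p\cap Z_p$ at all. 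If you want to keep your rank formulation, you would have to rank lower-priority vertices by the round of the $Z_p$-iteration at which they first became won and freeze their strategies accordingly --- at which point you have reconstructed the paper's layered argument, and the lexicographic signature machinery becomes unnecessary. Your base case and the easy direction ($\alpha=p\bmod 2$, where any cycle meeting $V_p\setminus Z_p$ has top priority of $\alpha$'s parity) are fine.
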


\begin{proof}
We prove by induction.
Lemma~\ref{lemma:str} is trivially true for the empty game.
We assume that Lemma~\ref{lemma:str} holds 
after computing the fixpoint of $Z_{p-1}$ and show that it holds 
after computing the fixpoint of $Z_p$.
To prove the property for both players, we distinguish the two cases $\alpha \neq (p \bmod 2)$ and $\alpha = (p \bmod 2)$.

%\noindent
\textbf{Case 1. $\alpha \neq (p \bmod 2)$}.\qquad

We maintain a set of won vertices $W_\alpha\subseteq V_{\leq p}$ and a strategy $\sigma_\alpha$ for all vertices in $V_\alpha\cap W_\alpha$.
We update $W_\alpha$ and $\sigma_\alpha$ after every iteration of the fixpoint computation.
We prove that Lemma~\ref{lemma:str} holds for $W_\alpha$ and $\sigma_\alpha$ after each update.
Initially, $W_\alpha=\emptyset$ and $\sigma_\alpha=\emptyset$ so Lemma~\ref{lemma:str} trivially holds.
Recall that the fixpoint operation repeatedly recomputes the fixpoint of $Z_{p-1}$ and updates $Z_p$ with all $p$-vertices that now play in one step to $\textsf{Won}(\alpha)$, the winning region of $\alpha$ in the entire game.
After computing the lower fixpoint $Z_{p-1}$, we obtain the set $\textsf{Won}(\alpha)_{<p}$ and the strategy $\sigma_{\alpha, <p}$ (for the vertices $\alpha$ wins in $V_{<p}$).
We then set
\[
\begin{tabu}{rrl}
W'_\alpha & := & W_\alpha \cup \textsf{Won}(\alpha)_{<p} \\
\sigma'_\alpha & := & \sigma_\alpha \cup \big(\sigma_{\alpha,<p}\cap(\textsf{Won}(\alpha)\setminus W_\alpha)\big)
\end{tabu}
\]
where $W'_\alpha$ and $\sigma'_\alpha$ will be the next $W_\alpha$ and $\sigma_\alpha$.
It is critical that we keep the old strategy $\sigma_\alpha$ for all vertices in $W_\alpha$.
Notice that the region won by $\alpha$ in the lower game monotonically increases with every iteration, since the only difference between the fixpoint iterations is that more vertices with priority $p$ are now won by $\alpha$.
That is, all vertices in $W_\alpha$ are still won after recomputing the fixpoint of $Z_{p-1}$.
We then add each vertex in $V_p \cap \textsf{Onestep}_\alpha$ to $W'_\alpha$ and exactly when we add such a vertex controlled by $\alpha$ to $W'_\alpha$, we choose a successor in $\textsf{Won}(\alpha)$ as the strategy for that vertex to $\sigma'_\alpha$.
Now $W'_\alpha$ contains exactly all vertices in $\textsf{Won}(\alpha)_{\leq p}$ after the update to $Z_p$
and $\sigma'_\alpha$ has a strategy for all vertices in $V_\alpha\cap W'_\alpha$.

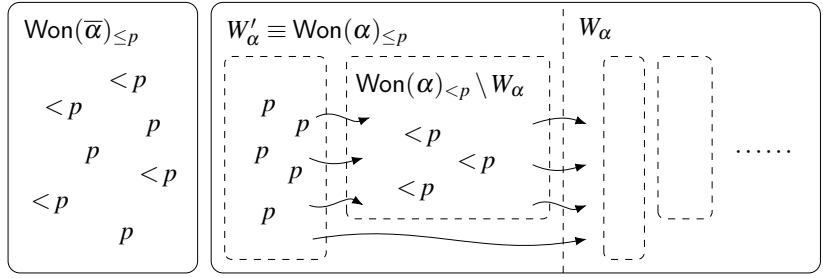
\begin{figure}[tbp]
\begin{center}
\scalebox{0.9}{
\begin{tikzpicture}

%\draw[step=0.5,black,thin,dashed,draw=black!10] (0.1,0.1) grid (9.9,3.9);

\draw[rounded corners=5pt] (-3,0) { -- ++(0,4) -- ++(2.8,0)} -- ++(0,-4) -- cycle;
\draw[rounded corners=5pt] (0,0) { -- ++(0,4) -- ++(9,0)} -- ++(0,-4) -- cycle;
\draw[rounded corners=3pt,dashed] (0.2,0.2) { -- ++(0,3) -- ++(1.5,0)} -- ++(0,-3) -- cycle;
\draw[rounded corners=3pt,dashed] (2.0,0.8) { -- ++(0,2.4) -- ++(3.0,0)} -- ++(0,-2.4) -- cycle;
% \draw[rounded corners=3pt] (4.8,0.2) { -- ++(0,3) -- ++(3.5,0)} -- ++(0,-3) -- cycle;
\draw[dashed] (5.2,0) -- ++(0,4);
\draw[rounded corners=3pt,dashed] (5.8,0.2) { -- ++(0,3) -- ++(0.6,0)} -- ++(0,-3) -- cycle;
\draw[rounded corners=3pt,dashed] (6.6,0.8) { -- ++(0,2.4) -- ++(0.8,0)} -- ++(0,-2.4) -- cycle;
%\draw[rounded corners=3pt] (6.8,0.2) { -- ++(0,3) -- ++(1.3,0)} -- ++(0,-3) -- cycle;

\draw node[align=left,anchor=north west] at (-2.9,3.9) {$\textsf{Won}(\invalpha)_{\leq p}$};

\draw node[align=left,anchor=north west] at (0.1,3.9) {$W'_\alpha\equiv\textsf{Won}(\alpha)_{\leq p}$};

\draw node[draw=none,align=left,anchor=north west] at (0.6,1.1) {$p$};
\draw node[draw=none,align=left,anchor=north west] at (1.0,1.7) {$p$};
\draw node[draw=none,align=left,anchor=north west] at (0.5,2) {$p$};
\draw node[draw=none,align=left,anchor=north west] at (1.1,2.4) {$p$};
\draw node[draw=none,align=left,anchor=north west] at (0.6,2.7) {$p$};

\draw node[draw=none,align=left,anchor=north west] at (2.0,3.1) {$\textsf{Won}(\alpha)_{<p}\setminus W_\alpha$};

\draw node[draw=none,align=left,anchor=north west] at (5.3,3.9) {$W_\alpha$};

\draw node[draw=none,align=left,anchor=north west] at (7.6,2) {$\cdots\cdots$};

\draw node[draw=none,align=left,anchor=north west] at (-2.8,1.3) {$<\!p$};
\draw node[draw=none,align=left,anchor=north west] at (-1.2,1.7) {$<\!p$};
\draw node[draw=none,align=left,anchor=north west] at (-2.0,2) {$p$};
\draw node[draw=none,align=left,anchor=north west] at (-1.1,2.4) {$p$};
\draw node[draw=none,align=left,anchor=north west] at (-2.6,2.7) {$<\!p$};
\draw node[draw=none,align=left,anchor=north west] at (-1.5,0.8) {$p$};
\draw node[draw=none,align=left,anchor=north west] at (-1.65,3.1) {$<\!p$};

%			\draw node[draw=none,align=left,anchor=north west] at (3.1,1.0) {$<\!p$};
\draw node[draw=none,align=left,anchor=north west] at (2.6,1.5) {$<\!p$};
\draw node[draw=none,align=left,anchor=north west] at (3.5,1.9) {$<\!p$};
\draw node[draw=none,align=left,anchor=north west] at (2.7,2.3) {$<\!p$};

\draw[-Latex] (1.45,1) to [out=-20,in=150] ++(0.8,0);
\draw[-Latex] (1.45,1.7) to [out=-25,in=195] ++(0.8,0);
\draw[-Latex] (1.55,2.3) to [out=30,in=200] ++(0.8,0);

\draw[-Latex] (1.5,0.5) to [out=10,in=190] ++(4.05,0);

\draw[-Latex] (4.75,1.0) to [out=30,in=200] ++(0.8,0);
\draw[-Latex] (4.75,1.6) to [out=-25,in=195] ++(0.8,0);
\draw[-Latex] (4.75,2.2) to [out=20,in=160] ++(0.8,0);

\end{tikzpicture}}\end{center}
\vspace{-1.5em} %% undo centering 

\caption{Schematic overview accompanying the proof of Lemma~\ref{lemma:str}. All cycles in $W'_\alpha$ consistent with $\sigma'_\alpha$ are in the $\textsf{Won}(\alpha)_{<p}$ regions, $\alpha=1-(p \bmod 2)$.}
\label{fig:regions}
\end{figure}

We partition $W'_\alpha$ into three regions $W_\alpha$, $\textsf{Won}(\alpha)_{<p}\setminus W_\alpha$, $(V_p \cap \textsf{Onestep}_\alpha)\setminus W_\alpha$.
We now prove that strategy $\sigma'_\alpha$ is such that all cycles consistent with $\sigma'_\alpha$ in $W'_\alpha$ are either fully inside $W_\alpha$ or fully inside $\textsf{Won}(\alpha)_{<p}\setminus W_\alpha$. No cycles are between the three regions or in the third region.
See also Fig.~\ref{fig:regions}.
%All cycles are in the $\textsf{Won}($\alpha$)_{<p}$ fragments; the strategies $\sigma_{\alpha,<p}$ 

\textit{First}, by induction hypothesis, we know that $\sigma_\alpha$ is such that $\alpha$ does not and $\invalpha$ cannot play from $W_\alpha$ to $\textsf{Won}(\invalpha)$ of the previous update, i.e., $V_{\leq p}\setminus W_\alpha$, which includes all vertices in $\textsf{Won}(\alpha)_{<p}\setminus W_\alpha$.
Hence, it is not possible to play from $W_\alpha$ to the other two regions.
\textit{Second}, by induction hypothesis, we know that $\sigma_{\alpha, <p}$ is such that no play is possible from $\textsf{Won}(\alpha)_{<p}$ to $\textsf{Won}(\invalpha)$, which includes the vertices in $V_p\setminus W_\alpha$.
This then also holds for the subregion $\textsf{Won}(\alpha)_{<p}\setminus W_\alpha$ with strategy $\sigma'_\alpha$.
\textit{Third}, any $\invalpha$-vertex in $V_p$ that is now a distraction only has successors already in $\textsf{Won}(\alpha)$ and $\alpha$-vertices in $V_p$ only choose successors already in $\textsf{Won}(\alpha)$.
Hence, no play stays in $V_p\cap \textsf{Onestep}_\alpha$ and therefore there are no cycles in $V_p\cap \textsf{Onestep}_\alpha$.
Since all cycles are only inside $W_\alpha$ or inside $\textsf{Won}(\alpha)_{<p}\setminus W_\alpha$,
we know by induction hypothesis about $\sigma'_\alpha$ that all cycles in $W_\alpha$ are won by $\alpha$ due to $\sigma_\alpha$ and that all cycles in $\textsf{Won}(\alpha)_{<p}\setminus W_\alpha$ are won by $\alpha$ due to $\sigma_{\alpha,<p}$.
Hence, all cycles consistent with $\sigma'_\alpha$ in $W'_\alpha$ are won by $\alpha$.

Furthermore, by induction hypothesis, we know that strategy $\sigma_\alpha$ does not allow a play from $W_\alpha$ to $\textsf{Won}(\invalpha)$; by induction hypothesis, we know that strategy $\sigma_{\alpha,<p}$ does not allow a play from $\textsf{Won}(\alpha)_{<p}\setminus W_\alpha$ to $\textsf{Won}(\invalpha)$; and by construction, every $p$-vertex that is now a distraction is either owned by $\invalpha$ and {cannot} play to $\textsf{Won}(\invalpha)$, or is owned by $\alpha$ with a strategy to play to $\textsf{Won}(\alpha)$ and does not play to $\textsf{Won}(\invalpha)$.

We have now proven that Lemma~\ref{lemma:str} holds after each iteration of the fixpoint; therefore it holds when the fixpoint computation is finished.

\textbf{Case 2. $\alpha = (p \bmod 2)$}.\qquad

After computing the fixpoint of $Z_p$,
we set $W_\alpha := \textsf{Won}(\alpha)_{<p} \cup (V_p\setminus Z_p)$ and choose as the strategy $\sigma_{\alpha,<p}$ and for $\alpha$-vertices in $V_p$ any successor in $\textsf{Won}(\alpha)$.
We know by induction hypothesis that $\sigma_{\alpha,<p}$ only allows cycles inside $\textsf{Won}(\alpha)_{<p}$ that are won by $\alpha$; furthermore all cycles inside $(V_p\setminus Z_p)$ are won by $\alpha$ with priority $p$ and the same holds for cycles between $\textsf{Won}(\alpha)_{<p}$ and $(V_p\setminus Z_p)$.
Furthermore, by induction hypothesis, there is no play from $\textsf{Won}(\alpha)_{<p}$ to $\textsf{Won}(\invalpha)$ and since the fixpoint of $Z_p$ is completed, we have a strategy in $\textsf{Won}(\alpha)$ for all $\alpha$-vertices in $(V_p\setminus Z_p)$ and no $\invalpha$-vertex in $(V_p\setminus Z_p)$ can play to $\textsf{Won}(\invalpha)$.
%\qed
\end{proof}

\begin{lemma}
The \textup{\texttt{DFI}} algorithm solves parity games.
\end{lemma}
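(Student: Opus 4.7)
The plan is to observe that once \texttt{DFI} terminates, the returned set $Z$ realises the outermost fixpoint $\textsf{Distraction}$, and then to invoke Lemma~\ref{lemma:str} at the top priority $p=d$ to conclude that the partition $W_\Even, W_\Odd$ coincides with the actual winning regions.

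First I would argue termination. The while loop either advances the counter $p$ or resets it to $0$ after enlarging $Z$ by at least one new $p$-distraction; in the latter case all \emph{lower} sets $Z_q$ with $q<p$ are cleared but $Z$ restricted to $V_p\cup V_{p+1}\cup\dots\cup V_d$ is strictly increased. Ordering states lexicographically by $(Z\cap V_d, Z\cap V_{d-1}, \dots, Z\cap V_0)$ shows that each restart makes strict progress in this well-founded order; since $V$ is finite there can be only finitely many restarts and within each restart at most $d+1$ increments of $p$, so the algorithm halts. When it halts, $p=d+1$ means that for every priority $q\leq d$ the inner loop found no new distractions, which is exactly the statement that $Z$ simultaneously satisfies $Z\cap V_q = \{v\in V_q\mid \textsf{onestep}(v,Z)\neq q\bmod 2\}$. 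Hence $Z$ equals $\textsf{Distraction}$ as given by the nested $\mu$-definition.

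Second I would invoke Lemma~\ref{lemma:str} at the outermost stage. After the fixpoint of $Z_d$ is reached, the lemma yields, for each $\alpha\in\{\Even,\Odd\}$, a region $W_\alpha\equiv\textsf{Won}(\alpha)_{\leq d}=\textsf{Won}(\alpha)$ together with a positional strategy $\sigma_\alpha$ on $V_\alpha\cap W_\alpha$ such that (i) player $\alpha$ never plays from $W_\alpha$ into $\textsf{Won}(\invalpha)$, (ii) player $\invalpha$ cannot play from $W_\alpha$ into $\textsf{Won}(\invalpha)$, and (iii) every cycle consistent with $\sigma_\alpha$ inside $W_\alpha$ is won by $\alpha$. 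Properties (i) and (ii) imply that every play from $v\in W_\alpha$ consistent with $\sigma_\alpha$ stays in $W_\alpha$ forever. Because $W_\alpha$ is finite and $\sigma_\alpha$ is positional, the residual graph is finite and deterministic on $\alpha$-vertices, so any such play eventually enters a bottom strongly connected component of this subgraph; the set $\inf(\pi)$ of infinitely visited vertices then spans a cycle, and by (iii) that cycle is won by $\alpha$. Therefore $\pr(\inf(\pi))$ has parity $\alpha$ and $\sigma_\alpha$ is winning on $W_\alpha$. Since $W_\Even$ and $W_\Odd$ partition $V$ and both players win their part, these are exactly the winning regions, and the output $W_\Even=\{v\mid \textsf{winner}(v,Z)=0\}$, $W_\Odd=V\setminus W_\Even$ of \texttt{DFI} is correct.

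The main obstacle is not any individual step but rather making the connection between the algorithmic iteration of Algorithm~\ref{alg:dfi1} and the denotational nested-$\mu$ definition of $\textsf{Distraction}$ watertight: one must check that the ``reset lower, restart from $0$'' schedule really is a faithful implementation of $\mu Z_d \dots \mu Z_0$ rather than some weaker iteration order. Once that correspondence and termination are in place, everything else reduces to a direct appeal to Lemma~\ref{lemma:str} combined with the elementary observation that on a finite graph a positional strategy whose cycles are all winning is a winning strategy.
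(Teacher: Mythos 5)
Your proposal is correct and follows essentially the same route as the paper: the paper's own proof is a one-line appeal to Lemma~\ref{lemma:str} at the highest priority, noting that the losing player cannot leave each region and all cycles inside are won by its owner. You additionally spell out termination (via the lexicographic measure on $(Z\cap V_d,\dots,Z\cap V_0)$) and the standard step from ``all cycles consistent with a positional strategy in a closed finite region are won'' to ``all plays are won'', both of which the paper leaves implicit.
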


\begin{proof}
By Lemma~\ref{lemma:str},
after computing the fixpoint $Z_p$ of the highest priority $p$ of the game,
the regions $\textsf{Won}(\alpha)$ for players $\Even$ and $\Odd$ according to the set $Z$ 
are won by players $\Even$ and $\Odd$ respectively, as from each region the losing player cannot leave and the winner has a strategy to win all cycles inside the region.
%\qed
\end{proof}

%Notice that this proof also provides us a winning strategy for both players.
%We will make use of this fact below.

\subsection{Freezing below the fixpoint}

In the proof of Lemma~\ref{lemma:str},
we computed a strategy $\sigma_\alpha$ for player $\alpha$.
The obvious question is how we can modify \texttt{DFI}
to compute this strategy.
We propose an extension to \texttt{DFI} that 
%can be recognized also from Zielonka's recursive algorithm and that
follows easily from the proof of Lemma~\ref{lemma:str}.
After recomputing each lower fixpoint in the fixpoint iteration of a set $Z_p$,
notice that the vertices in $W_\alpha$, where $\alpha=1-(p \bmod 2)$, are still winning
for $\alpha$.
We do not need to recompute whether these vertices are distractions,
as we already know that they will be in the winning region of $\alpha$.
%The vertices with $\alpha$'s priority will not become distractions
%and the vertices with $\invalpha$'s priority will stay distractions.
The only change will be that vertices currently winning for $\invalpha$ might become winning for $\alpha$.
We propose to use a set of \emph{frozen vertices} $F_p$ associated with each
fixpoint $Z_p$ that we do not recompute until fixpoint $Z_p$ has been
completed and we only recompute the lower fixpoints for vertices in
$V_{<p}\setminus F_p$.

%We fix a subset of one of both winning regions.
%Notice that this is a more general approach.

%\begin{lemma}
%\label{lemma:frozen}
%The winning regions $W_\Even$ and $W_\Odd$ after an iteration of the fixpoint $Z_p$ are
%identical if we first fix a set of frozen vertices $F_\Even\subseteq W_\Even$ won by Even
%and/or a set of frozen vertices $F_\Odd\subseteq W_\Odd$ won by Odd and then compute the fixpoint $Z_p$.
%\end{lemma}
\begin{lemma}
	\label{lemma:frozen}
	The winning region $W_\alpha$ ($\alpha\in\{\Even,\Odd\}$) after computing the fixpoint $Z_p$ is
	identical if we first fix a set of frozen vertices $F_\alpha\subseteq W_\alpha$ as won by player $\alpha$
	and then compute the fixpoint $Z_p$.
\end{lemma}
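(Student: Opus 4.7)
The plan is to prove Lemma~\ref{lemma:frozen} by leveraging the correctness of \texttt{DFI} established in Lemma~\ref{lemma:str}, together with an argument that freezing preserves the invariants of the nested fixpoint iteration. I would split $F_\alpha$ by priority parity: let $F_\alpha^{=} := \{v\in F_\alpha \mid \pr(v)\bmod 2 = \alpha\}$ and $F_\alpha^{\neq} := F_\alpha \setminus F_\alpha^{=}$. For $v \in F_\alpha^{=}$ the frozen classification ``won by $\alpha$'' coincides with the default $v\notin Z_{\pr(v)}$ that holds after any reset, so freezing only forbids transient additions of such $v$ to $Z_{\pr(v)}$. For $v \in F_\alpha^{\neq}$ freezing forces $v\in Z_{\pr(v)}$ across resets, essentially pre-populating $Z$ with the devious-distraction classification that the unfrozen run would eventually produce.

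Next, I would compare Run A (standard \texttt{DFI}) with Run B (\texttt{DFI} with freezing of $F_\alpha\subseteq W_\alpha$). By induction on the outer iterations of $Z_p$, I would establish that every vertex Run B places in $Z$ is a true distraction: additions from $F_\alpha^{\neq}$ lie in $W_\alpha$ by hypothesis and hence are distractions in the true fixpoint, while the remaining additions come from the same \textsf{OnestepDistraction} operator as in Run A, whose correctness on intermediate states is precisely the content of the invariant proved for Lemma~\ref{lemma:str}. Conversely, every distraction eventually identified by Run A is also identified by Run B: suppressing transient additions of vertices in $F_\alpha^{=}$ cannot affect the final fixpoint, because by monotonicity of each $\mu Z_q$ in $Z_q$ and the fact that $F_\alpha^{=}\subseteq W_\alpha$ guarantees these vertices are not in the final $Z_{\pr(v)}$ anyway, the least fixpoint taken over the non-frozen variables (with the frozen ones held at their true-fixpoint values) coincides with the restriction of the true nested least fixpoint to those variables. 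Applying Lemma~\ref{lemma:str} to Run B's terminal state then yields that its $W_\alpha$ equals the unique $W_\alpha$ computed by Run A.

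The hard part will be the converse direction: \textsf{OnestepDistraction} is not globally monotone in $Z$ (adding to $Z_q$ can simultaneously create some new one-step distractions at one parity and eliminate others at the opposite parity), so I cannot simply invoke a top-level monotonicity argument. The argument has to be carried out level by level through the nested $\mu$ operators, where monotonicity in the individual variable $Z_q$ does hold, and using that the frozen values are consistent with the true nested fixpoint at every priority level so that the Knaster--Tarski-style coincidence of ``least fixpoint above a consistent sub-state'' with the unconditional least fixpoint applies at each level.
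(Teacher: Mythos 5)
Your route is genuinely different from the paper's, and in its current form it has a real gap. The paper does not compare two runs of the algorithm at all: it argues by contradiction directly from the strategy properties established in Lemma~\ref{lemma:str}. If some $U\subseteq W_\alpha\setminus F_\alpha$ were won by $\invalpha$ in the frozen computation, then since $\invalpha$ cannot play from $W_\alpha$ into $\textsf{Won}(\invalpha)$, player $\invalpha$ would have to win $U$ while confined to $W_\alpha$ --- but $\sigma_\alpha$ wins every play that stays inside $W_\alpha$. That is the whole proof; no lattice-theoretic analysis of the nested fixpoints is needed.

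The gap in your converse direction is the assertion that ``the frozen values are consistent with the true nested fixpoint at every priority level so that the Knaster--Tarski-style coincidence applies at each level.'' Consistency with the \emph{final} nested fixpoint is not the property you need: during the $k$-th outer iteration of $Z_p$, the inner fixpoint $\mu Z_{p-1}\dots$ is evaluated in a context strictly below the final one, and its value there generally \emph{differs} on the frozen vertices from their final values. Holding those vertices at their final values therefore perturbs every intermediate inner fixpoint, so the ``lfp above a consistent sub-state'' lemma does not apply at those stages. What you actually need is a sandwich invariant --- at every level and every iteration, Run~B's iterate lies between Run~A's iterate and the final fixpoint --- proved by simultaneous induction through all $d+1$ levels, using a post-fixpoint argument to show the frozen intermediate inner fixpoints can only move \emph{towards} the final answer, and handling the antitone cross-parity dependencies (e.g.\ by passing to the alternating $\mu/\nu$ form of Section~\ref{sec:tworelated} where each variable occurs positively). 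You name this difficulty but do not supply the invariant. The same issue infects your forward direction: you invoke ``the invariant proved for Lemma~\ref{lemma:str}'' on Run~B's intermediate states, but that invariant was established only for the unfrozen run, and Run~B's states differ from Run~A's precisely because of the freezing. Either close these steps with the explicit sandwich induction, or switch to the paper's semantic argument, which avoids reasoning about intermediate states entirely.
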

\begin{proof}
Proof by contradiction.	
Assume that some vertices $U\subseteq W_\alpha\setminus F_\alpha$ are won by $\invalpha$.
By Lemma~\ref{lemma:str}, we have that $\invalpha$ cannot reach $W_\invalpha$ from $U$, therefore $\invalpha$
has a strategy to win by staying inside $U$, \emph{regardless of the strategy of $\alpha$}.
However, by Lemma~\ref{lemma:str}, $\alpha$ has a strategy to win all plays that stay inside $W_\alpha$.
%\qed
\end{proof}

As the set of frozen vertices $F_p$ is set to $W_\alpha$ before the next iteration of the lower fixpoint, and since $W_\alpha\subseteq W'_\alpha$, therefore $F_p\subseteq W'_\alpha$ and by Lemma~\ref{lemma:frozen} we know that recomputing the lower fixpoint while ``freezing'' vertices $F_p$ yields the same (correct) result.
Notice that the frozen vertices are exactly those of which we want to keep the strategy $\sigma_\alpha$ fixed.
Then whenever we evaluate which player wins a vertex in one step, we set the strategy accordingly, while all the frozen vertices keep their current strategy.
Since a vertex $v$ is in at most one frozen set at a time, we can also just employ a single function $F\colon V\to\{-,0,1,\dotsc,d\}$ to encode whether a vertex is frozen and at which priority.
This leads to Algorithm~\ref{alg:dfi}.
%Here we use $F$ as shorthand for $F_0\cup F_1\cup\cdots\cup F_d$.

\begin{algorithm}[tb]
	\Def{\DFI{$\Game$}}{
		$Z \leftarrow V\mapsto 0$ \tcp*{start with no distractions}
		$F \leftarrow V\mapsto -$ \tcp*{start with no frozen vertices}
		$p \leftarrow 0$ \tcp*{start with lowest priority}
		\While(\tcp*[f]{while $\leq$ highest priority}){$p\leq d$}{
			$\alpha \leftarrow p \bmod 2$ \tcp*{current parity}
			$\textup{Chg} \leftarrow 0$ \;
%			\ForAll(\tcp*[f]{evaluate non-frozen vertices}){$v\in(V_p\setminus(F\cup Z))$}{
			\ForAll(\tcp*[f]{evaluate non-frozen vertices}){$v\in V_p \colon F[v]=- \;\land\; Z[v]=0$}{
				$\alpha',\text{str}[v] \leftarrow \texttt{onestep}(v, Z)$ \tcp*{update strategy}
				\If(\tcp*[f]{new distraction?}){$\alpha'\neq \alpha$}{
					$Z[v] \leftarrow 1$ \tcp*{update fixpoint $Z_p$}
					$\textup{Chg} \leftarrow 1$ \tcp*{flag as changed}
				}
			}
			\If(\tcp*[f]{did $Z_p$ change?}){$\textup{Chg}$}{
				\ForAll(\tcp*[f]{freeze or reset lower vertices}){$v \in V_{<p} \colon F[v]=-$}{
					\lIf(\tcp*[f]{freeze vertex if won by $\invalpha$}){$\textsf{winner}(v,Z) = \invalpha$}{$F[v]\leftarrow p$}
					\lElse(\tcp*[f]{reset lower fixpoints}){$Z[v]\leftarrow 0$}
				}

%				$F \leftarrow F\,[\{ v \in V_{<p} \mid F[v]=- \land \textsf{winner}(v,Z) = \invalpha \}\mapsto p]$ \tcp*{freeze}
%				$F_p \leftarrow F_p \cup \{ v \notin F \mid \pr(v) < p \land \textsf{winner}(v,Z) = \invalpha \} $ \tcp*{freeze}
%				$Z \leftarrow Z\,[\{ v \in V_{<p} \mid F[v]=-\} \mapsto 0]$ \tcp*{reset all lower fixpoints}
				$p \leftarrow 0$ \tcp*{restart with lowest priority}
			}
			\Else{
				\lForAll(\tcp*[f]{thaw $v\in F_p$}){$v \in V_{<p} \colon F[v]=p$}{$F[v]\leftarrow -$}
%				$F \leftarrow F\,[\{ v \mid F[v]=p \}\mapsto -]$ \tcp*{thaw $v\in F_p$}
				$p \leftarrow p + 1$ \tcp*{fixpoint, continue higher}
			}
		}
		$W_\Even, W_\Odd\leftarrow \{v \mid \textsf{winner}(v,Z)=0 \},\{v \mid \textsf{winner}(v,Z)=1 \}$ \;
$\sigma_\Even, \sigma_\Odd \leftarrow (v\in (W_\Even\cap V_\Even))\mapsto\textsf{str}[v],(v\in (W_\Odd\cap V_\Odd))\mapsto\textsf{str}[v]$ \;
\Return $W_\Even,W_\Odd,\sigma_\Even,\sigma_\Odd$
%		\Return $W_\Even, W_\Odd$ where $W_\Even$ $\leftarrow$ $\{v\mid \textsf{winner}(v,Z)=0\}$, $W_\Odd$ $\leftarrow$ $V \setminus W_\Even$ 
	}
%\end{algorithm}
%\BlankLine
%\begin{algorithm}[H]
	\Def{\Onestep{$v$, $Z$}}
	{
		$\alpha\leftarrow \textbf{if }v\in V_\Even\textbf{ then }0\textbf{ else }1$ \tcp*{obtain $\alpha$ the owner of $v$}
		\ForAll(\tcp*[f]{see if $\alpha$ can win in one step}){$u\in E(v)$}{
			\lIf(\tcp*[f]{if so, return winner and strategy}){$\textsf{winner}(u,Z)=\alpha$}{\Return $\alpha,u$}
		}
		\Return $\invalpha,-$ \tcp*{the opponent wins in one step, no strategy}
	}
\caption{The \texttt{DFI} algorithm extended to compute winning strategies}
\label{alg:dfi}
\vspace{-.2em}
\end{algorithm}

If we compare with Zielonka's recursive algorithm as presented in~\cite[Algorithm 3]{DBLP:conf/tacas/Dijk18},
we see that the frozen vertices in \texttt{DFI} are exactly the vertices in $W'_\invalpha$ that are not recomputed in the second recursion of line~10 of~\cite[Algorithm 3]{DBLP:conf/tacas/Dijk18}.
The recursive algorithm and \texttt{DFI} thus use the same mechanism to decide that a vertex is a distraction and to preserve the correct winning strategy.
The major difference between the two is that \texttt{DFI} relies on the one-step attractor while the recursive algorithm uses full attractor computation.
In fact, while the recursive algorithm requires repeated attractor computation and recursion,
\texttt{DFI} can be implemented with a simple loop,
as demonstrated in Algorithms~\ref{alg:dfi1}~and~\ref{alg:dfi}.

\subsection{Implementation}
%\label{sec:implementation}
%\input{implementation}

We implement \texttt{DFI} in the parity game solver \textsc{Oink}~\cite{DBLP:conf/tacas/Dijk18}.
We use a bitvector to represent $Z$, recording 1 if a vertex is a distraction and 0 otherwise.
In addition,
we use a simple \texttt{int} array to represent $F$ as a function $V\to\{-,0,1,\dotsc,d\}$, as each vertex is only in at most one $F_p$ set at a time.
We also sort all vertices by priority after reading the input file,
so we can simply start with the first vertex and restart with the first vertex whenever we are done with a priority and the fixpoint is updated.
The implementation is available online via \url{https://www.github.com/trolando/oink}.

% On-the-fly priority compression.

\begin{lemma}
	The \texttt{DFI} algorithm requires $\pazocal{O}(n\cdot\log(n))$ space and $\pazocal{O}(n^d)$ time.
\end{lemma}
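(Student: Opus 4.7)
The plan is to verify the two bounds separately: space is a direct accounting of the data structures in Algorithm~\ref{alg:dfi}, while the time bound follows from the standard analysis of a nested monotone fixpoint over a finite lattice.

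For the space bound, I would enumerate exactly what Algorithm~\ref{alg:dfi} maintains beyond the input game: the bitvector $Z$ of size $n$; the array $F\colon V\to\{-,0,\ldots,d\}$, which requires $\lceil\log_2(d+2)\rceil=O(\log n)$ bits per vertex (using $d<n$); and the strategy array $\textsf{str}\colon V\to V$, which stores a single vertex index per source and hence uses $O(\log n)$ bits per vertex. The preprocessing step that sorts vertices by priority contributes only one permutation of $V$, another $O(n\log n)$ bits. Summing gives the claimed $O(n\log n)$ working space.

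For the time bound, I would set $n_p:=|V_p|$ and let $M_p$ denote the total number of times $Z_p$ is strictly enlarged throughout the run. Each enlargement adds at least one vertex of $V_p$ that was not previously in $Z_p$, and $Z_p$ is reset to $\emptyset$ only when some $Z_q$ with $q>p$ is enlarged. A top-down induction on $d-p$ then gives $M_p\le n_p\cdot\prod_{q>p}(n_q+1)$, so the total number of enlargement events is bounded by $\prod_p(n_p+1)$, which is $O(n^d)$ after distributing one factor per priority level. It remains to charge the work of each iteration of the outer \textbf{while}-loop to an enlargement event: the forall loop at priority $p$ examines at most the $n_p$ non-frozen vertices of $V_p$, each at cost $O(|E(v)|)$ inside \textsf{onestep}, and the freeze/reset sweep over $V_{<p}$ is executed only when $Z_p$ actually changes. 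Amortising the sweep cost against the enlargement that triggered it, and summing the per-priority forall work using the bound on $M_p$, gives the overall $O(n^d)$ time bound.

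The main obstacle is getting the exponent tight: a naive charging that pays $\Theta(n)$ per restart of the outer loop (for rescanning all lower priorities, or for the freeze/reset sweep) easily produces $O(n^{d+1})$. To reach $O(n^d)$ one has to exploit that the work at priority $p$ is proportional to $n_p$, not to $n$, and that frozen vertices are skipped inside the forall so that a vertex of $V_p$ is genuinely re-examined only while $Z_p$ is still growing. A clean way to formalise this is to count pairs (iteration, vertex examined at its own priority), bound the pairs at priority $p$ by $n_p\prod_{q\ge p}(n_q+1)$ using the monotone growth of $Z_q$ for $q\ge p$, and then collapse one factor of $n$ when summing over $p$; this is the step I expect to require the most care.
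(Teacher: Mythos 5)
Your proposal follows essentially the same route as the paper: the space bound is the identical per-vertex accounting (one distraction bit, $\lceil\log(d+1)\rceil$ bits for the freeze level, $\lceil\log(n)\rceil$ bits for the strategy), and the time bound rests on the same observation that a higher fixpoint only grows monotonically across resets of the lower ones, so each $Z_p$ (with $n_p=|V_p|$) is enlarged at most $n_p$ times between consecutive resets. Your write-up is in fact more explicit than the paper's two-sentence argument, and the step you flag as delicate --- converting $\prod_p(n_p+1)$ into $\pazocal{O}(n^d)$ rather than $\pazocal{O}(n^{d+1})$ --- is exactly the step the paper itself elides with ``there are $d+1$ fixpoints and each fixpoint can only be updated at most $n$ times,'' so you are not missing anything that the paper's own proof supplies.
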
%
\begin{proof}
	For every vertex, we use 1 bit to mark whether the vertex is a distraction, furthermore $\lceil\log(d+1)\rceil$ bits for the priority at which a vertex is frozen and $\lceil\log(n)\rceil$ bits for the chosen strategy of each vertex.
	Hence the space complexity is $\pazocal{O}(n\cdot(1+\lceil\log(n)\rceil+\lceil\log(d+1)\rceil))=\pazocal{O}(n\cdot\log(n))$. %, which is similar to algorithms like Zielonka's recursive algorithm and priority promotion.
	Whenever we reset lower fixpoints, a higher fixpoint monotonically increases. As there are $d+1$ fixpoints and each fixpoint can only be updated at most $n$ times, we obtain an upper bound of $\pazocal{O}(n^d)$.
\end{proof}

\subsection{Related work}

An optimization to only reset the fixpoints of the other parity between iterations has been proposed~\cite[Sec. 5.1]{DBLP:journals/corr/BruseFL14} by Bruse et al. and also applied~\cite{DBLP:journals/corr/abs-1809-03097} by Sanchez et al.
However our optimization is different and actually more powerful.
Not only do we not reset the lower fixpoints of the same parity, we actually do not re-evaluate all vertices with a lower priority that are currently won by player $1-(p \bmod 2)$ until the fixpoint $Z_p$ is computed,
which includes all vertices currently in the lower fixpoints $Z_{p-2}, Z_{p-4},\dotsc$ but also vertices of priorities $p-1,p-3,\dotsc$ that are not in $Z_{p-1},Z_{p-3},\dotsc$.

Hofmann et al.~\cite{DBLP:conf/spin/HofmannNR16} propose a way to compute winning strategies (called ``certificates'') directly for $\mu$-calculus model-checking which is very similar to our proposal here.
However they only compute the strategy for player $0$ and also have to maintain the witness strategies for all recursion depths of the fixpoint algorithm, resulting in a space complexity of $\pazocal{O}(\lvert S \rvert ^ 2 \lvert \phi \rvert^ 2)$, where $\lvert S \rvert$ is the size of the transition system and $\lvert \phi \rvert$ is the size of the $\mu$-calculus formula.
As the number of vertices of the parity game $n=\lvert S \rvert  \lvert \phi \rvert$, we can rewrite the space complexity of \texttt{DFI} as $\pazocal{O}(\lvert S \rvert  \lvert \phi \rvert \log(\lvert S \rvert  \lvert \phi \rvert))$ which improves upon~\cite{DBLP:conf/spin/HofmannNR16}.
This is a direct result of our optimization of freezing vertices between iterations of the fixpoint computation.

%\begin{algorithm}[H]
%	\Def{\DFI{$\Game$}}{
%		\tcc{assume vertices are sorted by priority}
%		$Z \leftarrow V\mapsto 0$ \tcp*{start with no distractions}
%		$\textsf{str} \leftarrow V \mapsto \bot$ \tcp*{start with empty strategy}
%		$Z \leftarrow F\mapsto 0$ \tcp*{start with no distractions}
%		$i \leftarrow 0$ \tcp*{start with lowest vertex}
%		$p \leftarrow \pr(V[i])$ \tcp*{start with lowest priority}
%		$\textup{Chg} \leftarrow \texttt{False}$ \tcp*{whether $Z_p$ is updated}
%		\While{\texttt{True}}{
%			\If{$i=n \lor \pr(V(i))\neq p$}{
%				\If{$\textup{Chg}$}{
%					$Z \leftarrow Z[\{ v \mid \pr(v) < p\}\mapsto 0]$ \tcp*{reset all lower vertices}
%					\textbf{goto} 3 \tcp*{restart with lowest vertex}
%				}
%				\ElseIf{$i=n$}{
%					\Return $\{v\mid \text{winner}(v,Z)=0\}$, $\{v\mid \text{winner}(v,Z)=1\}$
%				}
%				\Else{
%					$p \leftarrow \pr(V(i))$ \tcp*{$Z_p$ not updated; continue}
%				}
%			}
%			\Else{
%				\If{$\neg Z[i] \land \text{next}(V(i), Z)\neq \pr(V(i)) \bmod 2$}{
%					$Z[i] \leftarrow 1$ \tcp*{$i$th vertex is distraction}
%					$\textup{Chg} \leftarrow \texttt{True}$ \tcp*{mark that $Z_p$ is updated}
%				}
%				$i \leftarrow i+1$ \;
%			}
%		}
%	}
%\end{algorithm}

\section{Comparing with \texttt{APT} and \texttt{BFL}}
\label{sec:tworelated}

The \texttt{APT} algorithm originates with Vardi and Kupferman, who proposed an algorithm to solve parity games using weak alternating automata in~\cite{DBLP:conf/stoc/KupfermanV98}.
They extend parity games with two sets \textit{Visiting} and \textit{Avoiding}
meaning sets of states that are ``good for even'' and ``bad for even'' respectively. %(``good for odd'').
The \texttt{APT} algorithm was implemented and proven correct by Di Stasio et al~\cite{DBLP:conf/wia/StasioMPV16}.
They present \texttt{APT} for a \emph{minimum} parity condition, where the winner is determined by the lowest instead of the highest priority.
We initially present \texttt{APT} similarly, but rewrite to a \emph{maximum} parity condition at the end of a series of rewriting steps.
Furthermore, they denote with $V$ and $A$ the sets \textit{Visiting} and \textit{Avoiding}; we use $\Vis$ and $\Avoid$.
For consistency, we use $\mu$-calculus notation wherever appropriate.

The authors introduce the one-step attractor that computes all vertices in the game that can be attracted to a given set of vertices $X$ in one step,
\[
\textsf{force}_\alpha(X) := \{v\in V_\alpha \mid X \cap E(v) \neq \emptyset \} \cup \{v\in V_\invalpha \mid E(v) \subseteq X \}
\]

We can also write this one-step attractor using $\mu$-calculus notation,
\[
\textsf{force}_\alpha(X) := (V_\alpha \wedge \bdiamond X ) \lor (V_\invalpha \wedge \bbox X)
\]

Thus, $\textsf{force}_0$ attracts for player Even and $\textsf{force}_1$ for player Odd.
Also, if one player can force vertices to play to X, then we know that the opponent can force all other vertices to the rest of the game, i.e.,
\[
\textsf{force}_\alpha(X) = V\setminus\textsf{force}_\invalpha(V\setminus X)
\]

The \texttt{APT} algorithm is described as a fixpoint that is inductively defined\footnote{The published paper presents an incorrect definition, which has since been corrected.} to compute the set of winning states for Even $\textsf{Win}_0(V_1\cdot V_2 \cdots V_d, \emptyset, \emptyset)$ given a parity condition $\beta:=V_1\cdot V_2 \cdots V_d$ of the sets of states with priorities $1, 2, \dotsc, d$,
\[
\begin{tabu}{rcl}
\textsf{Win}_0(\varepsilon, \Vis, \Avoid) & := & \textsf{force}_0(\Vis) \\
\textsf{Win}_1(\varepsilon, \Avoid, \Vis) & := & \textsf{force}_1(\Avoid) \\
\textsf{Win}_0(V_i\cdot\beta', \Vis, \Avoid) & := & \mu Y^i . V \setminus \textsf{Win}_1(\beta', \Avoid\vee(V_i\setminus Y^i), \Vis\vee(V_i\land Y^i)) \\
\textsf{Win}_1(V_i\cdot\beta', \Avoid, \Vis) & := & \mu Y^i . V \setminus \textsf{Win}_0(\beta', \Vis\vee(V_i\setminus Y^i), \Avoid\vee(V_i\land Y^i)) \\
\end{tabu}
\]

%We can actually greatly simplify this definition using the observation $V\setminus \textsf{Win}_1(\beta, \Avoid, \Vis) = \textsf{Win}_0(\beta, \Vis, Avoid)$ resulting in the following inductive definition:
%
%\[
%\begin{tabu}{rcl}
%\textsf{Win}_0(\varepsilon, \Vis, \Avoid) & = & \textsf{force}_0(\Vis) \\
%\textsf{Win}_1(\varepsilon, \Avoid, \Vis) & = & V\setminus \textsf{force}_0(\Vis) * \\
%\textsf{Win}_0(V_i\cdot\beta, \Vis, \Avoid) & = & \mu Y ( \textsf{Win}_0(\beta, \Avoid\cup(V_i\setminus Y), \Vis\cup(V_i\cap Y))) \\
%\end{tabu}
%\]

The point of this definition is the alternation between computing $\textsf{Win}_0$ for sets $V_i$ of odd priorities and $\textsf{Win}_1$ for sets $V_i$ of even priorities.
The $Y^i$ set associated with each $V_i$ of odd priority thus describes the states that are good for Even and the $Y^i$ set for each $V_i$ of even priority describes all states that are good for Odd.
We now rewrite the definition step by step.
To clarify the next steps, we take as example the formulas for parity games with respectively 2, 3, 4 and 5 priorities:
%Applying these rules gives for games with 2, 3, 4, 5 priorities the following expressions:

\[
\begin{tabu}{rcl}
%\textsf{Win}_0 & := & \textsf{force}_0(V^1) \\
%\textsf{Win}_0 & := & V\setminus\mu Y^1(\textsf{force}_1(V^2)) \\
\textsf{Win}_0(V_1\cdots V_2) & := & \mu Y^1.V\setminus\mu Y^2.V\setminus\textsf{force}_0(V^3) \\
\textsf{Win}_0(V_1\cdots V_3) & := & \mu Y^1.V\setminus\mu Y^2.V\setminus\mu Y^3.V\setminus\textsf{force}_1(V^4) \\
\textsf{Win}_0(V_1\cdots V_4) & := & \mu Y^1.V\setminus\mu Y^2.V\setminus\mu Y^3.V\setminus\mu Y^4.V\setminus\textsf{force}_0(V^5) \\
\textsf{Win}_0(V_1\cdots V_5) & := & \mu Y^1.V\setminus\mu Y^2.V\setminus\mu Y^3.V\setminus\mu Y^4.V\setminus\mu Y^5.V\setminus\textsf{force}_1(V^6) \\
V^3 & := & (V_1\land Y^1)\lor (V_2\setminus Y^2) \\
V^4 & := & (V_1\setminus Y^1) \lor (V_2\land Y^2) \lor (V_3\setminus Y^3)\\
V^5 & := & (V_1\land Y^1) \lor (V_2\setminus Y^2) \lor (V_3\land Y^3) \lor (V_4\setminus Y^4) \\
V^6 & := & (V_1\setminus Y^1) \lor (V_2\land Y^2) \lor (V_3\setminus Y^3) \lor (V_4\land Y^4) \lor (V_5\setminus Y^5) \\
\end{tabu}
\]

See further~\cite{DBLP:conf/wia/StasioMPV16} for this derivation.
We use the earlier equivalence to eliminate $\textsf{force}_1$ and the fact that the sets $V_i$ partition $V$ to rewrite $V\setminus V^i$ for even $i$,
\[
\begin{tabu}{rcl}
\textsf{Win}_0(V_1\cdots V_2) & := & \mu Y^1.V\setminus\mu Y^2.V\setminus\textsf{force}_0(V^3) \\
\textsf{Win}_0(V_1\cdots V_3) & := & \mu Y^1.V\setminus\mu Y^2.V\setminus\mu Y^3.\textsf{force}_0(V\setminus V^4) \\
\textsf{Win}_0(V_1\cdots V_4) & := & \mu Y^1.V\setminus\mu Y^2.V\setminus\mu Y^3.V\setminus\mu Y^4.V\setminus\textsf{force}_0(V^5) \\
\textsf{Win}_0(V_1\cdots V_5) & := & \mu Y^1.V\setminus\mu Y^2.V\setminus\mu Y^3.V\setminus\mu Y^4.V\setminus\mu Y^5.\textsf{force}_0(V\setminus V^6) \\
V^3 & := & (V_1\land Y^1)\lor (V_2\setminus Y^2) \\
V\setminus V^4 & := & (V_1\land Y^1) \lor (V_2\setminus Y^2) \lor (V_3\land Y^3)\\
V^5 & := & (V_1\land Y^1) \lor (V_2\setminus Y^2) \lor (V_3\land Y^3) \lor (V_4\setminus Y^4) \\
V\setminus V^6 & := & (V_1\land Y^1) \lor (V_2\setminus Y^2) \lor (V_3\land Y^3) \lor (V_4\setminus Y^4) \lor (V_5\land Y^5) \\
\end{tabu}
\]

We can now rewrite these to:
\[
\begin{tabu}{rcl}
\textsf{Win}_0(V_1\cdots V_2) & := & \mu Y^1.V\setminus\mu Y^2.V\setminus\textsf{force}_0(\Vis) \\
\textsf{Win}_0(V_1\cdots V_3) & := & \mu Y^1.V\setminus\mu Y^2.V\setminus\mu Y^3.\textsf{force}_0(\Vis) \\
\textsf{Win}_0(V_1\cdots V_4) & := & \mu Y^1.V\setminus\mu Y^2.V\setminus\mu Y^3.V\setminus\mu Y^4.V\setminus\textsf{force}_0(\Vis) \\
\textsf{Win}_0(V_1\cdots V_5) & := & \mu Y^1.V\setminus\mu Y^2.V\setminus\mu Y^3.V\setminus\mu Y^4.V\setminus\mu Y^5.\textsf{force}_0(\Vis) \\
\Vis & := & \big( V_{\text{even}} \land \bigvee_{p=1}^{d}{(V_p\setminus Y^p)} \big) \cup \big( V_{\text{odd}} \land \bigvee_{p=1}^{d}{(V_p\wedge Y^p)} \big) \\
\end{tabu}
\]

Now, $\Vis$ is the set of states that is good-for-Even \emph{according to the fixpoint sets} $Y^1\dots Y^d$.
After computing these fixpoints, the odd sets contain all vertices won by player Even,
while the even sets contain all vertices won by player Odd.
Membership of $Y^p$ is only relevant in $\Vis$ for vertices with priority $p$.
When computing sets $Y^p$, we actually only need to record vertices with priority $p$.
The sets $Y^p$ thus encode that vertices in $V_p$ are good for the other player.
That is, that these vertices are \emph{distractions}.

\begin{lemma}
\texttt{APT} is equivalent to \texttt{DFI}.
\end{lemma}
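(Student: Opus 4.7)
The plan is to massage the rewritten \texttt{APT} formula until it coincides with \texttt{DFI}'s definition of $\textsf{Distraction}$, under the identification $Y^p\cap V_p = Z_p\cap V_p$ already noted just before the lemma. First I would align the two parity conventions: \texttt{APT} is presented for minimum parity with priorities $1,\dots,d$, whereas \texttt{DFI} uses maximum parity with priorities $0,\dots,d$, so I would fix a parity-preserving, order-reversing renaming of priorities, making \texttt{APT}'s outermost $\mu Y^1$ correspond to \texttt{DFI}'s outermost $\mu Z_d$ and similarly at every intermediate level.

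Second, I would match the one-step operators. Using the identification of fixpoint variables, the rewritten
\[
\Vis = \bigl(V_\text{even}\wedge\bigvee_p (V_p\setminus Y^p)\bigr) \vee \bigl(V_\text{odd}\wedge\bigvee_p (V_p\wedge Y^p)\bigr)
\]
is literally $\textsf{Even}(Z)$ for $Z=\bigvee_p (V_p\wedge Z_p)$. Consequently $\textsf{force}_0(\Vis)=\textsf{Onestep}_0(Z)$ and, by complementation, $V\setminus\textsf{force}_0(\Vis)=\textsf{Onestep}_1(Z)$.

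Third, I would match the nested fixpoint structure by induction on nesting depth, from innermost outward. The subterm driving the $\mu Y^p$ fixpoint in \texttt{APT} is $\textsf{force}_0(\Vis)$ when $p$ is odd and $V\setminus\textsf{force}_0(\Vis)$ when $p$ is even, which is exactly the parity case split in $\textsf{OnestepDistraction}(Z)=(V_\text{even}\wedge\textsf{Onestep}_1(Z))\vee(V_\text{odd}\wedge\textsf{Onestep}_0(Z))$ when restricted to $V_p$. So at each level the \texttt{APT} iteration of $Y^p$, intersected with $V_p$, produces the same iterates as \texttt{DFI}'s $Z_p$, and the inductive hypothesis propagates this through the outer fixpoints.

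The main obstacle is making this last step rigorous in the presence of the alternating $V\setminus$'s that appear in \texttt{APT} but not in \texttt{DFI}. The resolution is the ``only $Y^p\cap V_p$ is relevant'' observation: outside $V_p$, the content of $Y^p$ is invisible to $\Vis$ and hence to every lower level, so the complementations in \texttt{APT} serve only to select which parity of one-step-winner is captured at level $p$, which is exactly what the parity case split in $\textsf{OnestepDistraction}$ accomplishes without explicit complements. A routine induction then yields $Y^p\cap V_p = Z_p\cap V_p$ at every level, from which both algorithms agree on the final winning regions $\textsf{Even}(Z)$ and $\textsf{Odd}(Z)$.
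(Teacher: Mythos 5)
Your proposal is correct and follows essentially the same route as the paper: identifying $Y^p\cap V_p$ with $Z_p\cap V_p$, observing that $\Vis$ coincides with $\textsf{Even}(Z)$ so the one-step operators match, and noting that both algorithms iterate the nested least fixpoints in the same order with the same reset discipline. The paper's own proof is just a terser statement of this same structural correspondence, so your version mainly adds the explicit bookkeeping (the priority renaming between the min- and max-parity conventions and the role of the alternating complements) that the paper leaves implicit.
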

\begin{proof}
Both algorithms compute the exact same sets $Y^p$, considering that only the vertices in $V_p$ are relevant for each $Y^p$ and we therefore only update $Y^p$ for vertices in $V_p$.
The algorithms both compute first the fixpoint of distracting vertices of the least important priority, then continue with the next priority.
Each time the fixpoint of some priority is updated with new distracting vertices, all deeper nested fixpoints are reset.
Thus, both algorithms compute the same sets of distracting vertices.
%\qed
\end{proof}

Based on the formulas by Walukiewicz~\cite{DBLP:conf/stacs/Walukiewicz96} that translate the winning condition of a parity game to $\mu$-calculus formulas over the parity game, Bruse et al. proposed a fixpoint algorithm we call \texttt{BFL}~\cite{DBLP:journals/corr/BruseFL14}.

\begin{lemma}
\texttt{APT} is equivalent to \texttt{BFL}.
\end{lemma}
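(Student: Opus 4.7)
Having already established that \texttt{APT} is equivalent to \texttt{DFI}, the plan is to prove equivalence transitively by showing that \texttt{BFL} is equivalent to \texttt{DFI}. The \texttt{BFL} algorithm evaluates the Walukiewicz $\mu$-calculus formula
\[
\phi_d := \sigma_d X_d .\; \sigma_{d-1} X_{d-1} .\; \cdots \sigma_0 X_0 .\; \bigvee_{p=0}^d \bigl(V_p \land \mathsf{step}_p(X_p)\bigr)
\]
where $\sigma_p = \nu$ if $p$ is even and $\sigma_p = \mu$ if $p$ is odd, and $\mathsf{step}_p(X_p) = \bdiamond X_p$ when $v \in V_\Even$ and $\bbox X_p$ when $v \in V_\Odd$, so that $\phi_d$ yields the winning region of player Even. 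The key idea is to convert the alternating $\mu/\nu$ fixpoints of \texttt{BFL} into the purely $\mu$ fixpoints of \texttt{DFI} by complementing every variable of even-priority parity, thereby shifting the perspective from ``vertices winning for their priority's player'' to ``distractions''.

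The main technical step is to apply repeatedly the duality $\nu X . \psi(X) = V \setminus \mu X .\; V \setminus \psi(V \setminus X)$, starting from the outermost fixpoint. Each such transformation replaces a $\nu$-binder for an even-priority variable with a $\mu$-binder over its complement $Z_p := V \setminus X_p$; the negations propagate inward and convert $\bdiamond$ into $\bbox$ and vice versa (using $V\setminus \bdiamond U = \bbox(V\setminus U)$), and swap $V_\Even$ with $V_\Odd$ in the disjuncts when they are touched an odd number of times. After performing this rewriting uniformly for all even priorities, every binder becomes $\mu$ and the body at priority $p$ describes exactly the vertices of $V_p$ that should be flipped away from the player of parity $p \bmod 2$. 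A direct comparison with the definitions of $\textsf{Onestep}_0$ and $\textsf{Onestep}_1$ in Section~\ref{sec:dfi} shows that the resulting body is precisely $\textsf{OnestepDistraction}\bigl(\bigvee_p V_p \land Z_p\bigr)$, so the transformed formula matches the \texttt{DFI} fixpoint verbatim.

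Once the formulas agree, it remains to argue operationally that \texttt{BFL} and \texttt{DFI} perform the same fixpoint iteration. Both algorithms, when evaluating a nested fixpoint of the above shape, benefit from the optimization (noted by \texttt{BFL} and reused in \texttt{DFI}) that membership of $Z_p$ is only relevant for vertices in $V_p$, so only those vertices are ever updated at level $p$. Thus both algorithms iterate over the same sets, reset the nested lower fixpoints whenever an outer set changes, and stabilise simultaneously. Combining the duality-based syntactic equivalence with the observation that both algorithms use the standard Kleene iteration to compute their nested fixpoints yields that \texttt{BFL} and \texttt{DFI} compute identical intermediate sets and identical final winning regions, and chaining with the previous lemma completes the proof that \texttt{APT} is equivalent to \texttt{BFL}.

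The main obstacle is bookkeeping in the duality rewriting: one must carefully track which $V_p$-disjunct in the body lies under an even versus odd number of negations after all the complementations, and verify that the box/diamond swaps at the leaves precisely realign with the $V_\Even$/$V_\Odd$ split to reproduce the \textsf{OnestepDistraction} definition. Once this alignment is verified on a generic priority level, the rest is routine.
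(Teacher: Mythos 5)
Your overall route is the same as the paper's, just run in the opposite direction: the paper rewrites the \texttt{APT} recursion into the \texttt{BFL} formula by substituting $Y^i := V\setminus X_i$ for even $i$ and using $\mu Y^i.\phi = V\setminus\nu X_i.V\setminus\phi$, whereas you start from the \texttt{BFL} formula, dualize every $\nu$-binder into a $\mu$-binder over the complemented variable, and land on the all-$\mu$ ``distraction'' form, closing the argument transitively through the already-proved equivalence of \texttt{APT} and \texttt{DFI}. That is the same duality and the same change of variables (complement exactly the even-parity sets), so this is not a genuinely different proof; your operational remarks about both algorithms resetting inner fixpoints in lockstep likewise mirror the paper's argument for \texttt{APT} versus \texttt{DFI}.

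There is, however, one concrete error to repair: your rendering of the \texttt{BFL}/Walukiewicz body is wrong. You write the body as $\bigvee_{p}(V_p\land\textsf{step}_p(X_p))$ with $\textsf{step}_p(X_p)=\bdiamond X_p$ or $\bbox X_p$, which asks whether a successor of $v$ lies in $X_{\pr(v)}$, i.e., in the set indexed by the priority of the \emph{current} vertex. The actual formula applies the modality to the classification of each successor by its \emph{own} priority:
\[
\big(V_\Even \wedge \bdiamond\textstyle\bigvee_{q}(V_q\wedge X_q)\big) \;\lor\; \big(V_\Odd \wedge \bbox\textstyle\bigwedge_{q}(\neg V_q\lor X_q)\big).
\]
This distinction is precisely what the final matching step hinges on: $\textsf{OnestepDistraction}$ classifies each successor $u$ via $\textsf{winner}(u,Z)$, which depends on $\pr(u)$ and $Z_{\pr(u)}$, not on $\pr(v)$. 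With the body as you wrote it, the claimed ``direct comparison with $\textsf{Onestep}_0$ and $\textsf{Onestep}_1$'' does not go through. Once the body is stated correctly, your duality bookkeeping (negations exchanging $\bdiamond$ with $\bbox$ and $\bigvee$ with $\bigwedge$, complementation applied only to the even-priority variables) reproduces the paper's chain of identities and the proof is complete.
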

\begin{proof}
We define $X_i$ as $Y^i$ for odd $i$ and $V\setminus Y^i$ for even $i$, i.e., $X_i$ represents the states that are good-for-Even. Rewriting $Y^i:=V\setminus X_i$ and $\mu Y^i.\phi:=V\setminus\nu X_i.V\setminus \phi$ for even $i$, and simply $Y^i:=X_i$ for odd $i$, we obtain
\[
\begin{tabu}{rcl}
\textsf{Win}_0(V_1\cdots V_2) & := & \mu X_1.\nu X_2.\textsf{force}_0(\Vis) \\
\textsf{Win}_0(V_1\cdots V_3) & := & \mu X_1.\nu X_2.\mu X_3.\textsf{force}_0(\Vis) \\
\textsf{Win}_0(V_1\cdots V_4) & := & \mu X_1.\nu X_2.\mu X_3.\nu X_4.\textsf{force}_0(\Vis) \\
\textsf{Win}_0(V_1\cdots V_5) & := & \mu X_1.\nu X_2.\mu X_3.\nu X_4.\mu X_5.\textsf{force}_0(\Vis) \\
\Vis & := & \bigvee_{p=1}^{d}{(V_p\wedge X_p)} \\
\end{tabu}
\]

Furthermore, because the sets $V_p$ partition $V$, we have that
\[
\bigvee_{p=1}^{d}{(V_p\wedge X_p)} \equiv \bigwedge_{p=1}^{d}{(V_p\to X_p)} \equiv \bigwedge_{p=1}^{d}{(\neg V_p \lor X_p)}
\]

%We are now ready for the final steps.
We use the definition of $\textsf{force}_0(X)$ to obtain
\[
\begin{tabu}{rcl}
\textsf{Win}_0 & := & \mu X_1.\nu X_2\dots\sigma X_d.\big( (V_\Even \wedge \bdiamond \bigvee_{p=1}^{d}{(V_p\wedge X_p)} ) \lor (V_\Odd \wedge \bbox \bigvee_{p=1}^{d}{(V_p\wedge X_p)}) \big) \\
& = & \mu X_1.\nu X_2\dots\sigma X_d.\big( (V_\Even \wedge \bdiamond \bigvee_{p=1}^{d}{(V_p\wedge X_p)} ) \lor (V_\Odd \wedge \bbox \bigwedge_{p=1}^{d}{(\neg V_p\lor X_p)}) \big) \\
& = & \mu X_1.\nu X_2\dots\sigma X_d.\big( (V_\Even \wedge \bigvee_{p=1}^{d}{\bdiamond(V_p\wedge X_p)} ) \lor (V_\Odd \wedge \bigwedge_{p=1}^{d}{\bbox(\neg V_p\lor X_p)}) \big)
\end{tabu}
\]
where $\sigma$ is $\nu$ if $d$ is odd and $\mu$ if it is even.

Finally, we change from the \emph{minimal} parity condition $V_1\cdots V_d$ to the \emph{maximal} parity condition $V_{d-1}\cdots V_0$ and then our final result is
$$\textsf{Win}_0 := \sigma X_{d-1}\dots\mu X_1.\nu X_0.\big( (V_\Even \wedge \bigvee_{p=0}^{d-1}{\bdiamond(V_p\wedge X_p)} ) \lor (V_\Odd \wedge \bigwedge_{p=0}^{d-1}{\bbox(\neg V_p\lor X_p)}) \big)$$

This is precisely the fixpoint formula in~\cite{DBLP:journals/corr/BruseFL14}.
%\qed
\end{proof}

In a footnote, Di Stasio et al write: ``The unravelings of $\textsf{Win}_0$ and $\textsf{Win}_1$ have some analogies with the fixed-point formula introduced in~\cite{DBLP:journals/corr/BruseFL14} also used to solve parity games.
Unlike our work, however, the formula presented there is just a translation of the Zielonka algorithm~\cite{DBLP:conf/stacs/Walukiewicz96}.''
We have now shown that these algorithms are equivalent.
Furthermore, as the computed fixpoints of the three algorithms are equivalent, so are their implementations. The only major difference is that \texttt{APT} does not find the winning strategies and \texttt{BFL} has a rather convoluted way to find the winning strategies.

\section{Empirical evaluation}
\label{sec:evaluation}
%!TEX root = main.tex

\begin{table}[bt]
	\begin{tabu} to \linewidth {X[1.4l]|X[r]X[1.2r]|X[r]X[r]|X[r]X[r]}
		\toprule
		& \multicolumn{2}{c|}{\textbf{equivalence checking}} & \multicolumn{2}{c|}{\textbf{model-checking}} & \multicolumn{2}{c}{\textbf{reactive synthesis}}\\
		\midrule
		priorities & \multicolumn{2}{c|}{2} & \multicolumn{2}{c|}{1--4} & \multicolumn{2}{c}{3--12} \\
		count & \multicolumn{2}{c|}{216} & \multicolumn{2}{c|}{313} & \multicolumn{2}{c}{223} \\
		\midrule
		& \textbf{mean} & \textbf{max} & \textbf{mean} & \textbf{max} & \textbf{mean} & \textbf{max} \\
		\# vertices & 3,288,890 & 40,556,396 & 866,289  & 27,876,961 & 921,484 & 31,457,288 \\
		\# edges & 10,121,422 & 167,527,601 & 2,904,500 & 80,830,465 & 1,693,544 & 59,978,691 \\
		avg. outdegree & 2.35 & 5.09 & 2.75 & 6.14 & 1.68 & 2.00 \\
		\bottomrule
	\end{tabu}	
	\caption{Statistics of the three benchmark sets used for the empirical evaluation.}
	\label{tbl:benchmarks}
\end{table}

The goal of the empirical evaluation is to study the performance of the version of \texttt{DFI} that computes winning strategies and to compare it with its closest cousins, the attractor-based algorithms Zielonka (\texttt{ZLK}), priority promotion (\texttt{PP}) and tangle learning (\texttt{TL}). We also compare with strategy iteration (\texttt{SI}) since this algorithm is used as a backend by the LTL synthesis tool \textsc{Strix}~\cite{DBLP:conf/cav/MeyerSL18}.

We do not report on the various crafted benchmarks designed to expose pathological behavior,
as it is clear that \texttt{DFI} is very much vulnerable to such artificial games.
We also do not report on random games with few priorities, as this has been done before with positive results~\cite{DBLP:conf/wia/StasioMPV16}, however this comparison has two problems.
\textit{First}, the comparison was done in \textsc{PGSolver}~\cite{DBLP:conf/atva/FriedmannL09},
which is certainly an important platform for parity game research,
but is also significantly slower than the implementations in \textsc{Oink}~\cite{DBLP:conf/tacas/Dijk18}.
\textit{Second}, there is no obvious reason why \textit{random} games with few priorities would be a good representative for games derived from actual applications.
Hence, we implement \texttt{DFI} in the parity game solver \textsc{Oink}
and we use as benchmarks those from model-checking and equivalence
checking proposed by Keiren~\cite{DBLP:conf/fsen/Keiren15}.
These are 313 model-checking and 216 equivalence checking games. %, $529$ games in total.
Furthermore, we consider a new category of 223 ``reactive synthesis'' benchmarks obtained via the LTL synthesis tool \textsc{Strix}~\cite{DBLP:conf/cav/MeyerSL18} from the synthesis competition~\cite{DBLP:journals/corr/abs-1904-07736}.
See also Table~\ref{tbl:benchmarks}.
Notable is the very high number of vertices compared to the number of priorities,
and also that the average outdegree of vertices in these games is very low.

%In the following, we also use \emph{cactus plots} to compare the algorithms.
%Cactus plots show that an algorithm solved $X$ input games within $Y$ seconds individually.

The experiments were performed on a cluster of Dell PowerEdge M610 servers with two Xeon E5520 processors and 24~GB internal memory each.
The tools were compiled with gcc 5.4.0. 
All experimental scripts and log files are available via \url{https://www.
github.com/trolando/dfi-experiments}.

%\subsection{Results}

\begin{table}[tbp]
\begin{tabu}{X[3.0l]|X[r]X[r]X[r]X[r]X[r]@{$\;\;$}|X[r]X[r]X[r]X[r]X[1.2r]}
	\toprule
	\textbf{Dataset} & \multicolumn{5}{c|}{\textbf{with preprocessing}} & \multicolumn{5}{c}{\textbf{without preprocessing}} \\	
	& \texttt{dfi} & \texttt{zlk} & \texttt{pp} & \texttt{tl} & \texttt{si} & \texttt{dfi} & \texttt{zlk} & \texttt{pp} & \texttt{tl} & \texttt{si} \\
	\midrule
	equivalence & 401 & \textcolor{blue}{\textbf{381}} & 389 & 455 & 6218 & 970 & 470 & 444 & 570 & 19568 \\
	model-checking & \textcolor{blue}{\textbf{59}} & 73 & 82 & 166 & 292 & 156 & 79 & 93 & 182 & 2045 \\
	synthesis & 62 &\textcolor{blue}{\textbf{52}} & 57 & 59 & 158 & 64 & \textcolor{blue}{\textbf{51}} & 70 & 67 & 175 \\
	\bottomrule
\end{tabu}

\caption{Cumulative time in sec. (average of five runs) spent to solve all games in each set of benchmarks, with a timeout of 1800 seconds. We record 1800 seconds when the computation timed out.}
\label{tbl:results}
\vspace{-.5em}
\end{table}

Table~\ref{tbl:results} shows the cumulative runtimes of the five algorithms, with and without the preprocessing analysis (removing self-loops and winner-controlled winning cycles).
We record the total time spent solving, including preprocessing.
Only \emph{strategy iteration} had timeouts. %, while the other algorithms finished in time.
For the runs that timed out, we simply used the timeout value of $1800$ seconds, but this underestimates the actual runtime.
See further~\cite{DBLP:conf/tacas/Dijk18} for a comparison that shows that \texttt{zlk} and \texttt{pp} are the fastest solvers for practical games.
Notice that preprocessing is helpful in almost all cases.
A remarkable result is that the \texttt{DFI} algorithm (with preprocessing) is the fastest solver for model-checking games.
We learn from closer inspection of the data that almost all games are solved in a fraction of a second by all solvers,
while \texttt{DFI} slightly outperforms the other solvers for the handful of slower games.
An explanation for the speed of \texttt{DFI} probably lies in its simplicity and in favorable memory access patterns.
As the vertices are ordered by priority before solving, vertices are evaluated consecutively for the fixpoints,
which could be more efficient than when for example use of full attractors results in irregular memory accesses.
Furthermore, the implementation in C is a tight loop without recursion.
We also parallelized all three \textbf{forall} loops in Algorithm~\ref{alg:dfi} and obtain good speedups of up to 4.8$\times$ with 8 cores; however most games are already solved within seconds. In the interest of space, we omit these results.

\section{Conclusions}
\label{sec:conclusions}
%!TEX root = main.tex

We have discussed distractions in parity games and how various algorithms deal with the distractions.
It is important to the understanding of parity game solvers to know how they deal with distractions, that is, how they decide that a vertex is a distraction or no longer a distraction and why.
We show a fundamental difference between attractor-based algorithms and algorithms employing progress measures.

We have implemented a new fixpoint algorithm \texttt{DFI} that computes winning strategies by ``freezing'' the winning strategy that is the witness to the decision that certain vertices are now a distraction.

The remarkable result that \texttt{DFI} is the fastest solver for parity games from model-checking leads us to wonder whether parity games might be a distraction for solving $\mu$-calculus model-checking, as simple fixpoint iteration already selects successful strategies.

\clearpage

\bibliographystyle{eptcs}
\bibliography{lit}

\begin{thebibliography}{10}
\providecommand{\bibitemdeclare}[2]{}
\providecommand{\surnamestart}{}
\providecommand{\surnameend}{}
\providecommand{\urlprefix}{Available at }
\providecommand{\url}[1]{\texttt{#1}}
\providecommand{\href}[2]{\texttt{#2}}
\providecommand{\urlalt}[2]{\href{#1}{#2}}
\providecommand{\doi}[1]{doi:\urlalt{http://dx.doi.org/#1}{#1}}
\providecommand{\bibinfo}[2]{#2}

\bibitemdeclare{inproceedings}{DBLP:conf/cav/BenerecettiDM16}
\bibitem{DBLP:conf/cav/BenerecettiDM16}
\bibinfo{author}{Massimo \surnamestart Benerecetti\surnameend},
  \bibinfo{author}{Daniele \surnamestart Dell'Erba\surnameend} \&
  \bibinfo{author}{Fabio \surnamestart Mogavero\surnameend}
  (\bibinfo{year}{2016}): \emph{\bibinfo{title}{{Solving Parity Games via
  Priority Promotion}}}.
\newblock In: {\sl \bibinfo{booktitle}{{CAV} 2016}}, {\sl
  \bibinfo{series}{LNCS}} \bibinfo{volume}{9780},
  \bibinfo{publisher}{Springer}, pp. \bibinfo{pages}{270--290},
  \doi{10.1007/978-3-319-41540-6\_15}.

\bibitemdeclare{incollection}{DBLP:reference/mc/BradfieldW18}
\bibitem{DBLP:reference/mc/BradfieldW18}
\bibinfo{author}{Julian~C. \surnamestart Bradfield\surnameend} \&
  \bibinfo{author}{Igor \surnamestart Walukiewicz\surnameend}
  (\bibinfo{year}{2018}): \emph{\bibinfo{title}{The mu-calculus and Model
  Checking}}.
\newblock In: {\sl \bibinfo{booktitle}{Handbook of Model Checking}},
  \bibinfo{publisher}{Springer}, pp. \bibinfo{pages}{871--919},
  \doi{10.1007/978-3-319-10575-8\_26}.

\bibitemdeclare{inproceedings}{DBLP:journals/corr/BruseFL14}
\bibitem{DBLP:journals/corr/BruseFL14}
\bibinfo{author}{Florian \surnamestart Bruse\surnameend},
  \bibinfo{author}{Michael \surnamestart Falk\surnameend} \&
  \bibinfo{author}{Martin \surnamestart Lange\surnameend}
  (\bibinfo{year}{2014}): \emph{\bibinfo{title}{The Fixpoint-Iteration
  Algorithm for Parity Games}}.
\newblock In: {\sl \bibinfo{booktitle}{GandALF}}, {\sl
  \bibinfo{series}{{EPTCS}}} \bibinfo{volume}{161}, pp.
  \bibinfo{pages}{116--130}, \doi{10.4204/EPTCS.161.12}.

\bibitemdeclare{inproceedings}{DBLP:conf/cav/Dijk18}
\bibitem{DBLP:conf/cav/Dijk18}
\bibinfo{author}{Tom \surnamestart van Dijk\surnameend} (\bibinfo{year}{2018}):
  \emph{\bibinfo{title}{Attracting Tangles to Solve Parity Games}}.
\newblock In: {\sl \bibinfo{booktitle}{{CAV} {(2)}}}, {\sl
  \bibinfo{series}{LNCS}} \bibinfo{volume}{10982},
  \bibinfo{publisher}{Springer}, pp. \bibinfo{pages}{198--215},
  \doi{10.1007/978-3-319-96142-2\_14}.

\bibitemdeclare{inproceedings}{DBLP:conf/tacas/Dijk18}
\bibitem{DBLP:conf/tacas/Dijk18}
\bibinfo{author}{Tom \surnamestart van Dijk\surnameend} (\bibinfo{year}{2018}):
  \emph{\bibinfo{title}{Oink: An Implementation and Evaluation of Modern Parity
  Game Solvers}}.
\newblock In: {\sl \bibinfo{booktitle}{{TACAS} {(1)}}}, {\sl
  \bibinfo{series}{LNCS}} \bibinfo{volume}{10805},
  \bibinfo{publisher}{Springer}, pp. \bibinfo{pages}{291--308},
  \doi{10.1007/978-3-319-89960-2\_16}.

\bibitemdeclare{article}{DBLP:journals/corr/abs-1807-10210}
\bibitem{DBLP:journals/corr/abs-1807-10210}
\bibinfo{author}{Tom \surnamestart van Dijk\surnameend} (\bibinfo{year}{2018}):
  \emph{\bibinfo{title}{A Parity Game Tale of Two Counters}}.
\newblock {\sl \bibinfo{journal}{CoRR}} \bibinfo{volume}{abs/1807.10210}.
\newblock \bibinfo{note}{Accepted at GandALF 2019.}

\bibitemdeclare{inproceedings}{DBLP:conf/focs/EmersonJ91}
\bibitem{DBLP:conf/focs/EmersonJ91}
\bibinfo{author}{E.~Allen \surnamestart Emerson\surnameend} \&
  \bibinfo{author}{Charanjit~S. \surnamestart Jutla\surnameend}
  (\bibinfo{year}{1991}): \emph{\bibinfo{title}{Tree Automata, Mu-Calculus and
  Determinacy (Extended Abstract)}}.
\newblock In: {\sl \bibinfo{booktitle}{{FOCS}}}, \bibinfo{publisher}{{IEEE}
  Computer Society}, pp. \bibinfo{pages}{368--377},
  \doi{10.1109/SFCS.1991.185392}.

\bibitemdeclare{article}{DBLP:journals/tcs/EmersonJS01}
\bibitem{DBLP:journals/tcs/EmersonJS01}
\bibinfo{author}{E.~Allen \surnamestart Emerson\surnameend},
  \bibinfo{author}{Charanjit~S. \surnamestart Jutla\surnameend} \&
  \bibinfo{author}{A.~Prasad \surnamestart Sistla\surnameend}
  (\bibinfo{year}{2001}): \emph{\bibinfo{title}{On model checking for the
  mu-calculus and its fragments}}.
\newblock {\sl \bibinfo{journal}{Theor. Comput. Sci.}}
  \bibinfo{volume}{258}(\bibinfo{number}{1-2}), pp. \bibinfo{pages}{491--522},
  \doi{10.1016/S0304-3975(00)00034-7}.

\bibitemdeclare{article}{DBLP:journals/sttt/FearnleyJKSSW19}
\bibitem{DBLP:journals/sttt/FearnleyJKSSW19}
\bibinfo{author}{John \surnamestart Fearnley\surnameend},
  \bibinfo{author}{Sanjay \surnamestart Jain\surnameend}, \bibinfo{author}{Bart
  \surnamestart de~Keijzer\surnameend}, \bibinfo{author}{Sven \surnamestart
  Schewe\surnameend}, \bibinfo{author}{Frank \surnamestart Stephan\surnameend}
  \& \bibinfo{author}{Dominik \surnamestart Wojtczak\surnameend}
  (\bibinfo{year}{2019}): \emph{\bibinfo{title}{An ordered approach to solving
  parity games in quasi-polynomial time and quasi-linear space}}.
\newblock {\sl \bibinfo{journal}{{STTT}}}
  \bibinfo{volume}{21}(\bibinfo{number}{3}), pp. \bibinfo{pages}{325--349},
  \doi{10.1007/s10009-019-00509-3}.

\bibitemdeclare{inproceedings}{DBLP:conf/atva/FriedmannL09}
\bibitem{DBLP:conf/atva/FriedmannL09}
\bibinfo{author}{Oliver \surnamestart Friedmann\surnameend} \&
  \bibinfo{author}{Martin \surnamestart Lange\surnameend}
  (\bibinfo{year}{2009}): \emph{\bibinfo{title}{Solving Parity Games in
  Practice}}.
\newblock In: {\sl \bibinfo{booktitle}{{ATVA}}}, {\sl \bibinfo{series}{{LNCS}}}
  \bibinfo{volume}{5799}, \bibinfo{publisher}{Springer}, pp.
  \bibinfo{pages}{182--196}, \doi{10.1007/978-3-642-04761-9\_15}.

\bibitemdeclare{inproceedings}{DBLP:conf/spin/HofmannNR16}
\bibitem{DBLP:conf/spin/HofmannNR16}
\bibinfo{author}{Martin \surnamestart Hofmann\surnameend},
  \bibinfo{author}{Christian \surnamestart Neukirchen\surnameend} \&
  \bibinfo{author}{Harald \surnamestart Rue{\ss}\surnameend}
  (\bibinfo{year}{2016}): \emph{\bibinfo{title}{Certification for
  {\(\mu\)}-Calculus with Winning Strategies}}.
\newblock In: {\sl \bibinfo{booktitle}{{SPIN}}}, {\sl \bibinfo{series}{Lecture
  Notes in Computer Science}} \bibinfo{volume}{9641},
  \bibinfo{publisher}{Springer}, pp. \bibinfo{pages}{111--128},
  \doi{10.1007/978-3-319-32582-8\_8}.

\bibitemdeclare{article}{DBLP:journals/corr/abs-1904-07736}
\bibitem{DBLP:journals/corr/abs-1904-07736}
\bibinfo{author}{Swen \surnamestart Jacobs\surnameend},
  \bibinfo{author}{Roderick \surnamestart Bloem\surnameend},
  \bibinfo{author}{Maximilien \surnamestart Colange\surnameend},
  \bibinfo{author}{Peter \surnamestart Faymonville\surnameend},
  \bibinfo{author}{Bernd \surnamestart Finkbeiner\surnameend},
  \bibinfo{author}{Ayrat \surnamestart Khalimov\surnameend},
  \bibinfo{author}{Felix \surnamestart Klein\surnameend},
  \bibinfo{author}{Michael \surnamestart Luttenberger\surnameend},
  \bibinfo{author}{Philipp~J. \surnamestart Meyer\surnameend},
  \bibinfo{author}{Thibaud \surnamestart Michaud\surnameend},
  \bibinfo{author}{Mouhammad \surnamestart Sakr\surnameend},
  \bibinfo{author}{Salomon \surnamestart Sickert\surnameend},
  \bibinfo{author}{Leander \surnamestart Tentrup\surnameend} \&
  \bibinfo{author}{Adam \surnamestart Walker\surnameend}
  (\bibinfo{year}{2019}): \emph{\bibinfo{title}{The 5th Reactive Synthesis
  Competition {(SYNTCOMP} 2018): Benchmarks, Participants {\&} Results}}.
\newblock {\sl \bibinfo{journal}{CoRR}} \bibinfo{volume}{abs/1904.07736}.

\bibitemdeclare{article}{DBLP:journals/ipl/Jurdzinski98}
\bibitem{DBLP:journals/ipl/Jurdzinski98}
\bibinfo{author}{Marcin \surnamestart Jurdzinski\surnameend}
  (\bibinfo{year}{1998}): \emph{\bibinfo{title}{Deciding the Winner in Parity
  Games is in {UP} $\cap$ co-{UP}}}.
\newblock {\sl \bibinfo{journal}{Inf. Process. Lett.}}
  \bibinfo{volume}{68}(\bibinfo{number}{3}), pp. \bibinfo{pages}{119--124},
  \doi{10.1016/S0020-0190(98)00150-1}.

\bibitemdeclare{inproceedings}{DBLP:conf/stacs/Jurdzinski00}
\bibitem{DBLP:conf/stacs/Jurdzinski00}
\bibinfo{author}{Marcin \surnamestart Jurdzinski\surnameend}
  (\bibinfo{year}{2000}): \emph{\bibinfo{title}{Small Progress Measures for
  Solving Parity Games}}.
\newblock In: {\sl \bibinfo{booktitle}{{STACS}}}, {\sl \bibinfo{series}{LNCS}}
  \bibinfo{volume}{1770}, \bibinfo{publisher}{Springer}, pp.
  \bibinfo{pages}{290--301}, \doi{10.1007/3-540-46541-3\_24}.

\bibitemdeclare{inproceedings}{DBLP:conf/lics/JurdzinskiL17}
\bibitem{DBLP:conf/lics/JurdzinskiL17}
\bibinfo{author}{Marcin \surnamestart Jurdzinski\surnameend} \&
  \bibinfo{author}{Ranko \surnamestart Lazic\surnameend}
  (\bibinfo{year}{2017}): \emph{\bibinfo{title}{Succinct progress measures for
  solving parity games}}.
\newblock In: {\sl \bibinfo{booktitle}{{LICS}}}, \bibinfo{publisher}{{IEEE}
  Computer Society}, pp. \bibinfo{pages}{1--9},
  \doi{10.1109/LICS.2017.8005092}.

\bibitemdeclare{inproceedings}{DBLP:conf/fsen/Keiren15}
\bibitem{DBLP:conf/fsen/Keiren15}
\bibinfo{author}{Jeroen J.~A. \surnamestart Keiren\surnameend}
  (\bibinfo{year}{2015}): \emph{\bibinfo{title}{Benchmarks for Parity Games}}.
\newblock In: {\sl \bibinfo{booktitle}{{FSEN}}}, {\sl \bibinfo{series}{LNCS}}
  \bibinfo{volume}{9392}, \bibinfo{publisher}{Springer}, pp.
  \bibinfo{pages}{127--142}, \doi{10.1007/978-3-319-24644-4\_9}.

\bibitemdeclare{article}{DBLP:journals/tcs/Kozen83}
\bibitem{DBLP:journals/tcs/Kozen83}
\bibinfo{author}{Dexter \surnamestart Kozen\surnameend} (\bibinfo{year}{1983}):
  \emph{\bibinfo{title}{Results on the Propositional mu-Calculus}}.
\newblock {\sl \bibinfo{journal}{Theor. Comput. Sci.}} \bibinfo{volume}{27},
  pp. \bibinfo{pages}{333--354}, \doi{10.1016/0304-3975(82)90125-6}.

\bibitemdeclare{inproceedings}{DBLP:conf/stoc/KupfermanV98}
\bibitem{DBLP:conf/stoc/KupfermanV98}
\bibinfo{author}{Orna \surnamestart Kupferman\surnameend} \&
  \bibinfo{author}{Moshe~Y. \surnamestart Vardi\surnameend}
  (\bibinfo{year}{1998}): \emph{\bibinfo{title}{Weak Alternating Automata and
  Tree Automata Emptiness}}.
\newblock In: {\sl \bibinfo{booktitle}{{STOC}}}, \bibinfo{publisher}{{ACM}},
  pp. \bibinfo{pages}{224--233}, \doi{10.1145/276698.276748}.

\bibitemdeclare{inproceedings}{DBLP:conf/cav/MeyerSL18}
\bibitem{DBLP:conf/cav/MeyerSL18}
\bibinfo{author}{Philipp~J. \surnamestart Meyer\surnameend},
  \bibinfo{author}{Salomon \surnamestart Sickert\surnameend} \&
  \bibinfo{author}{Michael \surnamestart Luttenberger\surnameend}
  (\bibinfo{year}{2018}): \emph{\bibinfo{title}{Strix: Explicit Reactive
  Synthesis Strikes Back!}}
\newblock In: {\sl \bibinfo{booktitle}{{CAV} {(1)}}}, {\sl
  \bibinfo{series}{Lecture Notes in Computer Science}} \bibinfo{volume}{10981},
  \bibinfo{publisher}{Springer}, pp. \bibinfo{pages}{578--586},
  \doi{10.1007/978-3-319-96145-3\_31}.

\bibitemdeclare{inproceedings}{DBLP:journals/corr/abs-1809-03097}
\bibitem{DBLP:journals/corr/abs-1809-03097}
\bibinfo{author}{Lisette \surnamestart Sanchez\surnameend},
  \bibinfo{author}{Wieger \surnamestart Wesselink\surnameend} \&
  \bibinfo{author}{Tim A.~C. \surnamestart Willemse\surnameend}
  (\bibinfo{year}{2018}): \emph{\bibinfo{title}{A Comparison of BDD-Based
  Parity Game Solvers}}.
\newblock In: {\sl \bibinfo{booktitle}{GandALF}}, {\sl
  \bibinfo{series}{{EPTCS}}} \bibinfo{volume}{277}, pp.
  \bibinfo{pages}{103--117}, \doi{10.4204/EPTCS.277.8}.

\bibitemdeclare{inproceedings}{DBLP:conf/wia/StasioMPV16}
\bibitem{DBLP:conf/wia/StasioMPV16}
\bibinfo{author}{Antonio~Di \surnamestart Stasio\surnameend},
  \bibinfo{author}{Aniello \surnamestart Murano\surnameend},
  \bibinfo{author}{Giuseppe \surnamestart Perelli\surnameend} \&
  \bibinfo{author}{Moshe~Y. \surnamestart Vardi\surnameend}
  (\bibinfo{year}{2016}): \emph{\bibinfo{title}{Solving Parity Games Using an
  Automata-Based Algorithm}}.
\newblock In: {\sl \bibinfo{booktitle}{{CIAA}}}, {\sl \bibinfo{series}{{LNCS}}}
  \bibinfo{volume}{9705}, \bibinfo{publisher}{Springer}, pp.
  \bibinfo{pages}{64--76}, \doi{10.1007/978-3-319-40946-7\_6}.

\bibitemdeclare{inproceedings}{DBLP:conf/concur/Stirling95}
\bibitem{DBLP:conf/concur/Stirling95}
\bibinfo{author}{Colin \surnamestart Stirling\surnameend}
  (\bibinfo{year}{1995}): \emph{\bibinfo{title}{Local Model Checking Games}}.
\newblock In: {\sl \bibinfo{booktitle}{{CONCUR}}}, {\sl
  \bibinfo{series}{Lecture Notes in Computer Science}} \bibinfo{volume}{962},
  \bibinfo{publisher}{Springer}, pp. \bibinfo{pages}{1--11},
  \doi{10.1007/3-540-60218-6\_1}.

\bibitemdeclare{mastersthesis}{Verver2013}
\bibitem{Verver2013}
\bibinfo{author}{Maks \surnamestart Verver\surnameend} (\bibinfo{year}{2013}):
  \emph{\bibinfo{title}{Practical improvements to parity game solving}}.
\newblock Master's thesis, \bibinfo{school}{University of Twente}.

\bibitemdeclare{inproceedings}{DBLP:conf/stacs/Walukiewicz96}
\bibitem{DBLP:conf/stacs/Walukiewicz96}
\bibinfo{author}{Igor \surnamestart Walukiewicz\surnameend}
  (\bibinfo{year}{1996}): \emph{\bibinfo{title}{Monadic Second Order Logic on
  Tree-Like Structures}}.
\newblock In: {\sl \bibinfo{booktitle}{{STACS}}}, {\sl \bibinfo{series}{Lecture
  Notes in Computer Science}} \bibinfo{volume}{1046},
  \bibinfo{publisher}{Springer}, pp. \bibinfo{pages}{401--413},
  \doi{10.1007/3-540-60922-9\_33}.

\bibitemdeclare{article}{DBLP:journals/tcs/Zielonka98}
\bibitem{DBLP:journals/tcs/Zielonka98}
\bibinfo{author}{Wieslaw \surnamestart Zielonka\surnameend}
  (\bibinfo{year}{1998}): \emph{\bibinfo{title}{Infinite Games on Finitely
  Coloured Graphs with Applications to Automata on Infinite Trees}}.
\newblock {\sl \bibinfo{journal}{Theor. Comput. Sci.}}
  \bibinfo{volume}{200}(\bibinfo{number}{1-2}), pp. \bibinfo{pages}{135--183},
  \doi{10.1016/S0304-3975(98)00009-7}.

\end{thebibliography}

\end{document}